\newtheorem{theorem}{Theorem}[section]
\newtheorem{lemma}[theorem]{Lemma}
\newtheorem{proposition}[theorem]{Proposition}
\newtheorem{corollary}[theorem]{Corollary}
\theoremstyle{definition}
\newtheorem{definition}[theorem]{Definition}
\newtheorem{remark}[theorem]{Remark}
\newtheorem{example}[theorem]{Example}
\title{Variational Dissipative Mechanics on Lie Algebroids}
\author{Alexandre Anahory Simoes\thanks{School of Science and Technology, IE University, Madrid, Spain. (alexandre.anahory@ie.edu).} \, and Leonardo J. Colombo\thanks{Centro de Automática y Robótica (CSIC-UPM), Carretera de Campo Real, km 0, 200, 28500 Arganda del Rey, Spain. (leonardo.colombo@csic.es). The authors acknowledge financial support from Grant PID2022-137909-NB-C22 funded by the Spanish Ministry of Science and Innovation.}}
\date{\empty}
\begin{document}
\maketitle

\begin{center}
    \textit{This paper is dedicated to our friend Professor Juan Carlos Marrero\\ on the occasion of his 60th birthday.}
\end{center}

\begin{abstract}
We formulate a Herglotz-type variational principle on a Lie algebroid and derive the corresponding Euler--Lagrange--Herglotz equations for a Lagrangian  depending on an additional scalar variable $z$. 
This provides a geometric framework for dissipative systems on Lie algebroids
and recovers, as special cases, the classical Euler--Lagrange--Herglotz equations on tangent bundles, the Euler--Poincar\'e--Herglotz equations on a Lie algebra, and the 
Lagrange--Poincar\'e--Herglotz equations on Atiyah algebroids of principal bundles.
Starting from the local formulation, we then use Lie algebroid connections to obtain
a coordinate-free Euler--Lagrange--Poincar\'e--Herglotz and 
Hamilton--Pontryagin--Herglotz theory.
Finally, we establish energy balance laws and Noether--Herglotz-type results,
in which classical conserved quantities are replaced by dissipated invariants.
\end{abstract}

\section{Introduction}

Variational principles lie at the heart of classical and modern mechanics.
In the Lagrangian picture, the dynamics of a mechanical system with
configuration space $Q$ and Lagrangian $L \colon TQ \to \mathbb{R}$ is
obtained by requiring that the action functional
\[
S[q] = \int_0^T L\big(q(t),\dot q(t)\big)\,dt
\]
be stationary under variations of the curve $q(\cdot)$ with fixed endpoints.
The associated Euler--Lagrange equations provide a classical description of
conservative dynamics on $TQ$, and their reductions, under a symmetry group, yield the familiar
Euler--Poincar\'e and Lagrange--Poincar\'e equations \cite{holm1998euler}, \cite{cendra2001lagrangian}.

For systems with dissipation, the standard variational framework is
not directly applicable: dynamics under Rayleigh friction forces, thermodynamic
couplings, or interaction with a reservoir typically is not described by critical values of the classical action functional \cite{GayBalmazYoshimura2017PartI}.
One way to circumvent this difficulty, going back to Herglotz,
is to replace the action integral by an \emph{action variable}
$z(t)\in\mathbb{R}$ solving
\[
\dot z(t) = L\big(q(t),\dot q(t),z(t)\big),
\]
and to impose stationarity of the terminal value $z(T)$ under
variations of $q(\cdot)$.
The resulting Euler--Lagrange--Herglotz equations provide an alternative
variational description of a class of dissipative systems and
are closely related to contact geometry and contact Hamiltonian dynamics \cite{de2019contact}, \cite{de2020infinitesimal}, \cite{colombo2025liouville}, \cite{colombo2025homogeneous}.

On the other hand, Lie algebroids offer a unifying language for
geometric mechanics and reduction \cite{weinstein1996lagrangian}.
They encompass different pahse spaces such as the tangent bundle $TQ$, Lie algebras, action
algebroids and Atiyah algebroids of principal bundles. Thus, they form
a natural stage for symmetry-reduced
systems.
Lagrangian and Hamiltonian mechanics on Lie algebroids have been
developed in a series of works \cite{martinez2001lagrangian}, \cite{de2005lagrangian}, \cite{grabowska2006geometrical}, and the corresponding
Euler--Lagrange--Poincar\'e and Hamilton--Pontryagin formalisms have
been recently refined in the connection-based approach of
Li, Stern and Tang \cite{li2017lagrangian} and of Hu and Stern \cite{hu2024hamiltonian}.
In this framework, many reduced or constrained equations of motion
arise from a single geometric principle on an abstract algebroid.

The aim of this paper is to \emph{combine} these two viewpoints by
developing a \emph{Herglotz variational principle on a Lie algebroid}
$(E\to M,[\cdot,\cdot],\rho)$.
We consider a Herglotz-type Lagrangian $L \colon E \times \mathbb{R} \to \mathbb{R},\, (x,y,z)\mapsto L(x,y,z)$, and an additional scalar variable $z(t)$ evolving according to $\dot z(t) = L\big(x(t),y(t),z(t)\big)$, along an admissible curve $(x(t),y(t))$ in $E$.
Requiring stationarity of $z(T)$ under admissible variations
$E$-homotopic to $(x,y)$ leads to the system of
\emph{Euler--Lagrange--Herglotz equations on the Lie algebroid}, derived in \cite{anahory2024euler}. The variational principle and equations reduce to the classical Herglotz variational principle when
$E=TQ$.

Our first main result is the variational derivation of these
Euler--Lagrange--Herglotz equations on $(E,[\cdot,\cdot],\rho)$ in local coordinates. We then recall several familiar classes of equations that appear
as special cases such as as the classical Euler--Lagrange--Herglotz equations on $TQ$; the Hamel--Herglotz equations in quasi-velocities; the Euler--Poincar\'e--Herglotz equations on a Lie algebra and on an
        action Lie algebroid;
 and the Lagrange--Poincar\'e--Herglotz equations on the Atiyah algebroid
        of a principal bundle. In particular, we obtain a Herglotz-type, gauge-invariant dissipative
extension of Wong's equations on an Atiyah algebroid, together with a
thermomechanical interpretation for certain rigid body models with
entropy-like variables.

Our second main contribution is to analyze the \emph{energy balance} and
\emph{Noether-type} properties of these Herglotz systems on a Lie
algebroid.
We define a generalized Herglotz energy associated with $L$ and prove
a balance law showing that its evolution is governed by
$\partial L/\partial z$.
As a consequence, the energy is not conserved in general, but becomes
a \emph{dissipated quantity} in the sense of \cite{de2021symmetries} and,
after multiplication by a suitable integrating factor, it is
conserved along solutions.
An analogous result holds for Noether-type momenta associated with
infinitesimal symmetries of $L$.
We illustrate these ideas with examples including mechanical systems
with Rayleigh dissipation, dissipative Wong equations, and rigid bodies
coupled to entropy-like variables.

The third main ingredient of the paper is a \emph{coordinate-free}
reformulation of the Herglotz equations on a Lie algebroid, based on
$TM$-connections and the Euler--Lagrange--Poincar\'e viewpoint of
Li--Stern--Tang and Hu--Stern.
Given a $TM$-connection $\nabla$ on $E$ and the induced $E$-connection
$\nabla^*$ on $E^*$, we show that the local Euler--Lagrange--Herglotz
equations are equivalent to the intrinsic relation
\[
\overline{\nabla}^*_{a(t)}p(t)
-
\rho^*\!\big(dL_{\mathrm{hor}}(a(t),z(t))\big)
=\frac{\partial L}{\partial z}\big(a(t),z(t)\big)\,p(t),
\qquad
p = dL_{\mathrm{ver}}(a,z),
\]
for the momentum $p\in E^*$ along an admissible curve $a(t)\in E$.

Finally, we extend the Hamilton--Pontryagin principle on Lie algebroids to
a \emph{Hamilton--Pontryagin--Herglotz} principle, where the
Herglotz variable $z$ satisfies a differential equation including both the Lagrangian and a Lagrange multiplier, dual variable $p$, responsible for ensuring the additional constraint that a trajectory must be an $E$-path. The resulting implicit Euler--Lagrange--Poincar\'e--Herglotz equations
provide a convenient starting point for discretization and structure-preserving
numerical schemes on Lie groupoids. Several nontrivial examples are presented to illustrate the scope of the
theory. 

The paper is organized as follows.
Section~\ref{sec:preliminaries} recalls basic notions on Lie algebroids
and admissible curves.
In Section~\ref{sec:Herglotz_LA}, we introduce the Herglotz variational
principle on a Lie algebroid and derive the
Euler--Lagrange--Herglotz equations, including several classical
examples and reduction-type equations.
Section~\ref{section5} is devoted to the generalized energy
balance law and a Noether--Herglotz theorem, together with examples of
dissipated quantities.
Finally, in Sections~\ref{sec:coordinate_free} and \ref{sec:HPH}, we develop the
Euler--Lagrange--Poincar\'e--Herglotz equations in a connection-based,
coordinate-free form and formulate the Hamilton--Pontryagin--Herglotz
principle on a Lie algebroid, respectively.

\section{Preliminaries on Lie Algebroids}\label{sec:preliminaries}


Let $\tau:E \to M$ be a vector bundle of rank $r$ over a manifold $M$ of
dimension $n$. Denote by $\Gamma(E)$ the set of sections of this vector bundle, by $\mathfrak{X}(M)$ the set of vector fields on $M$, and by $C^{\infty}(M)$ the set of smooth functions on $M$. If $\gamma$ is a curve on $M$, we will denote by $\Gamma_{\gamma}(E)$ the  set of sections of $\Gamma(E)$ evaluated along the curve $\gamma$.

A \emph{Lie algebroid} structure on $E$ consists of a Lie bracket $[\cdot,\cdot]\colon \Gamma(E)\times\Gamma(E)\to\Gamma(E)$ and a vector bundle map over the identity $\rho\colon E\to TM$, or $\rho\colon \Gamma(E)\to \mathfrak{X}(M)$, called the anchor, satisfying the following Leibniz rule
\[
[\sigma_1, f\sigma_2] = f[\sigma_1,\sigma_2] + (\rho(\sigma_1)f)\,\sigma_2,
\]
for all $\sigma_1,\sigma_2\in\Gamma(E)$ and $f\in C^\infty(M)$.

Locally, let $(x^i)$, $i=1,\dots,n$ be coordinates on $M$, and 
$\{e_\alpha\}$, $\alpha=1,\dots,r$ a local basis of sections of $E$. Then the anchor and the bracket satisfy
\[
\rho(e_\alpha) = \rho^i_{\ \alpha}(x)\,\frac{\partial}{\partial x^i}, 
\qquad
[e_\alpha,e_\beta] = C^\gamma_{\ \alpha\beta}(x)\, e_\gamma,
\]
for some smooth functions $\rho^i_{\ \alpha}$ and
$C^\gamma_{\ \alpha\beta}$ caled the anchor components and the structure functions, respectively. The Jacobi identity and the
Leibniz rule impose useful identities on these functions.


The local basis of sections $\{e_{\alpha}\}$ induces the fibred coordinates $(x^{i},y^{\alpha})$ on $E$ adapted to $\tau$. A curve $a\colon [0,T] \to E$, written in local coordinates as
\[
a(t) = \big(x^i(t), y^\alpha(t)\big), \qquad
a(t) = y^\alpha(t) e_\alpha\big|_{x(t)},
\]
is called \emph{admissible} or an \textit{$E$-path} if its base curve $x(t)=\tau(a(t))$ satisfies
\begin{equation}\label{eq:admissible}
\dot x^i(t) = \rho^i_{\ \alpha}\big(x(t)\big)\, y^\alpha(t).
\end{equation}
The space of admissible curves generalizes the space of velocities
$(q(t),\dot q(t))$ in $TQ$.


Following the standard variational calculus on Lie algebroids \cite{martinez2008variational}, we describe
variations of admissible curves via \emph{$E$-homotopies}.

\begin{definition}
An \emph{$E$-homotopy} is a smooth map
\[
a \colon (-\varepsilon,\varepsilon)\times [0,T] \to E, \quad 
(s,t)\mapsto a_s(t),
\]
such that for each $s$, the curve $t\mapsto a_s(t)$ is admissible, i.e.
$\dot x_s^i(t)= \rho^i_{\ \alpha}(x_s(t))\,y_s^\alpha(t)$, where
$a_s(t) = (x_s(t),y_s(t))$.
We say that $a_s$ is a variation of the reference curve $a_0$.
\end{definition}

Define the variation field $\frac{\partial a}{\partial s}(0,t)\in T_{a_{0}(t)}E$ to be given in local coordinates by 
\[
\delta x^i(t) := \left.\frac{\partial x_s^i(t)}{\partial s}\right|_{s=0},\qquad
\delta y^\alpha(t) := \left.\frac{\partial y_s^\alpha(t)}{\partial s}\right|_{s=0}.
\]

A convenient way to parameterize variations of the reference curve $a_{0}(t)$ is via a time-dependent
section $\xi(t) \in \Gamma(E)$ along $x_{0}(t)$, written locally as
$\xi(t) = \xi^\alpha(t)\,e_\alpha|_{x_{0}(t)}$. Following \cite{martinez2008variational}, for an admissible $E$-homotopy the variation field satisfies
\[
\delta x^i = \rho^i_{\ \alpha}(x)\,\xi^\alpha.
\]
and
\[
\delta y^\gamma = \dot{\xi}^\gamma + C^\gamma_{\ \alpha\beta}(x)\,y^\alpha\,\xi^\beta.
\]



In the variational principle presented in the next section, we will consider variations with fixed endpoints of the base curve $x(t)$. Locally, these are the variations of the form above satisfying $\xi^\alpha(0)=\xi^\alpha(T)=0$. 

\begin{remark}
    In Section~3 of Martínez \cite{martinez2008variational}
the definition of admissible variations is given in an intrinsic, global form.
An $E$-homotopy is described as a Lie algebroid morphism
\[
\Phi = a\, dt + b\, ds : T(I\times J) \longrightarrow E,
\]
subject to the usual admissibility conditions and boundary constraints.
Given an admissible curve $a(t)$, the infinitesimal variation associated with a
time-dependent section $\sigma(t)\in \Gamma(E)$ along the base curve
$\gamma(t) = \tau(a(t))$
is defined through the canonical morphism
\[
\Xi_a : \Gamma_\gamma(E) \longrightarrow \Gamma_a(TE),
\qquad
\Xi_a(\sigma)(t)
= \rho_1\!\big(\chi_E(\sigma(t), a(t), \dot{\sigma}(t))\big),
\]
where $\chi_E : \mathcal{T}^{E}E\to \mathcal{T}^{E}E$ denotes the canonical involution in the prolongation algebroid $\mathcal{T}^{E}E$
and $\rho_1$ is the anchor of the vector bundle $\mathcal{T}^{E}E\to E$.
In local coordinates $(x^i, y^\alpha)$ this yields
\begin{equation}
\Xi_a(\sigma)(t)
= \rho^i_{\alpha}(\gamma(t))\,\sigma^\alpha(t)\,
\frac{\partial}{\partial x^i}
+ \Big(
\dot{\sigma}^\alpha(t)
+ C^{\alpha}_{\beta\gamma}(\gamma(t))\,a^\beta(t)\,\sigma^\gamma(t)
\Big)
\frac{\partial}{\partial y^\alpha}.
\label{eq:Xi_a_local}
\end{equation}
This expression provides a global, coordinate-free characterization
of admissible variation fields on a Lie algebroid.

The definition used in this paper corresponds to the
local expression of the operator $\Xi_a$ above.
There, the variation is introduced directly through
a time-dependent section $\xi(t)$ along $x(t)$, leading to
\[
\delta x^i = \rho^i_\alpha(x)\,\xi^\alpha,
\qquad
\delta y^\gamma = \dot{\xi}^\gamma
+ C^\gamma_{\alpha\beta}(x)\,y^\alpha\,\xi^\beta.
\]
Comparing with~\eqref{eq:Xi_a_local}, we see that
\[
\Xi_a(\xi) = \delta x^{i}\frac{\partial}{\partial x^i} + \delta y^{\alpha}\frac{\partial}{\partial y^{\alpha}},
\]
so both definitions are equivalent in content.
The formulation in Martínez's paper \cite{martinez2008variational} is fully geometric
and independent of coordinates, while the one used here
is its local version, more convenient for explicit computations
in the Herglotz variational setting.\hfill$\diamond$
\end{remark}

\section{Herglotz Variational Principle on a Lie Algebroid}\label{sec:Herglotz_LA}


Troughout the text, let $(x^{i},y^{\alpha})$ be fibred coordinates on $E$ associated to the projection $\tau$ and to a choice of a local basis of sections. Throughout the text the algebroid structure of $\tau:E\to M$ will be naturally extended to an algebroid over $E\times \mathbb{R}\to Q$ through the projection onto the first component. The Lie algebroid projection, the anchor map and the bracket of sections and all local functions will be written using the same notation.

Let $L \colon E \times \mathbb{R} \to \mathbb{R}$ be a smooth function, which we call a \emph{Herglotz-type Lagrangian}.
We introduce an additional scalar variable $z(t)\in\mathbb{R}$ satisfying the Herglotz differential equation
\begin{equation}\label{eq:herglotz_z}
\dot z(t) = L\big(x^{i}(t),y^{\alpha}(t),z(t)\big).
\end{equation}

An admissible trajactory solving the Herglotz equation is a triple $t\mapsto \big(x^{i}(t),y^{\alpha}(t),z(t)\big)\in E\times\mathbb{R}$,
such that:
\begin{equation}\label{eq:herglotz_admissible}
\dot x^i(t) = \rho^i_{\ \alpha}(x^{i}(t))\,y^\alpha(t), \qquad
\dot z(t) = L(x^{i}(t),y^{\alpha}(t),z(t)).
\end{equation}

\begin{definition}
We say that a triple $(x^{i}(t),y^{\alpha}(t),z(t))$ is a \emph{Herglotz extremal} on
the Lie algebroid if, for any $E$-homotopy $(x^{i}_s(t),y^{\alpha}_s(t),z_s(t))$
such that:
\begin{enumerate}
   \item for each $s$, $(x^{i}_s,y^{\alpha}_s,z_s)$ satisfies \eqref{eq:herglotz_admissible};
  \item $x^{i}_s(0)=x^{i}(0)$, $x^{i}_s(T)=x^{i}(T)$ for all $s$;
  \item $z_s(0)=z(0)$ for all $s$;
\end{enumerate}
we have the stationarity condition
\begin{equation}\label{eq:stationary}
\left.\frac{d}{ds}\right|_{s=0} z_s(T) = 0.
\end{equation}
\end{definition}

In other words, $z(T)$ plays the role of an action functional, and we require
it to be stationary under admissible $E$-homotopies.


\begin{theorem}[Euler--Lagrange--Herglotz equations on a Lie algebroid]
\label{thm:ELH_algebroid}
Let $(E\to M,[\cdot,\cdot],\rho)$ be a Lie algebroid whose anchor components are given by
$\rho^i_{\ \alpha}(x)$ and strcuture functions are $C^\gamma_{\ \alpha\beta}(x)$ with respect to local coordinates denoted by $(x^{i},y^{\alpha},z)$.
Let $L\colon E\times\mathbb{R}\to\mathbb{R}$ be a Herglotz-type Lagrangian and consider a curve $(x^{i}(t),y^{\alpha}(t),z(t))$ satisfying
\begin{equation}\label{eq:ELH_constraints}
\dot x^i(t) = \rho^i_{\ \alpha}(x^{i}(t))\,y^\alpha(t), 
\qquad
\dot z(t) = L(x^{i}(t),y^{\alpha}(t),z(t)).
\end{equation}
Then $(x^{i}(t),y^{\alpha}(t),z(t))$ is a Herglotz extremal, i.e. satisfies
\eqref{eq:stationary} for all admissible variations with $x^{i}(0),x^{i}(T),z(0)$ fixed, 
if and only if it satisfies the Euler--Lagrange--Herglotz equations:
\begin{equation}\label{eq:ELH_equations}
\frac{d}{dt}\left(\frac{\partial L}{\partial y^\alpha}\right)
+ C^\gamma_{\ \alpha\beta}(x)\,y^\beta \frac{\partial L}{\partial y^\gamma}
- \rho^i_{\ \alpha}(x)\,\frac{\partial L}{\partial x^i}
- \frac{\partial L}{\partial z}\,\frac{\partial L}{\partial y^\alpha}
= 0.
\end{equation}
\end{theorem}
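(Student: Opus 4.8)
The plan is to follow the standard first-variation strategy from variational calculus on Lie algebroids, adapted to the Herglotz setting where the ``action'' is the terminal value $z(T)$ rather than an integral. The essential new feature is that $z$ is defined dynamically by $\dot z = L(x,y,z)$, so the variation $\delta z(t)$ itself obeys a linear inhomogeneous ODE driven by the variations $\delta x^i, \delta y^\alpha$ and by $z$ through $\partial L/\partial z$. This coupling is exactly what produces the extra dissipative term $-\frac{\partial L}{\partial z}\frac{\partial L}{\partial y^\alpha}$ in \eqref{eq:ELH_equations}.

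First I would introduce an admissible $E$-homotopy $(x_s^i,y_s^\alpha,z_s)$ with variation field parameterized by a time-dependent section $\xi(t)=\xi^\alpha(t)e_\alpha|_{x(t)}$, so that by the formulas recalled in Section~\ref{sec:preliminaries} we have $\delta x^i=\rho^i_{\ \alpha}(x)\,\xi^\alpha$ and $\delta y^\gamma=\dot\xi^\gamma+C^\gamma_{\ \alpha\beta}(x)\,y^\alpha\xi^\beta$, with $\xi^\alpha(0)=\xi^\alpha(T)=0$. Differentiating the defining equation $\dot z_s=L(x_s,y_s,z_s)$ with respect to $s$ at $s=0$ and writing $u(t):=\delta z(t)$, I obtain the linear variational equation
\begin{equation}\label{eq:proofplan_u}
\dot u = \frac{\partial L}{\partial x^i}\,\delta x^i + \frac{\partial L}{\partial y^\alpha}\,\delta y^\alpha + \frac{\partial L}{\partial z}\,u,
\end{equation}
subject to the initial condition $u(0)=0$ coming from $z_s(0)=z(0)$. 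The stationarity condition \eqref{eq:stationary} is precisely $u(T)=0$, so the whole problem reduces to analyzing the solution of \eqref{eq:proofplan_u} at the terminal time.

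Next I would solve \eqref{eq:proofplan_u} explicitly using the integrating factor. Setting $\sigma(t):=\exp\!\big(-\int_0^t \frac{\partial L}{\partial z}\,d\tau\big)$, equation \eqref{eq:proofplan_u} becomes $\frac{d}{dt}\big(\sigma u\big)=\sigma\big(\frac{\partial L}{\partial x^i}\delta x^i+\frac{\partial L}{\partial y^\alpha}\delta y^\alpha\big)$, and integrating from $0$ to $T$ with $u(0)=0$ gives
\begin{equation}\label{eq:proofplan_uT}
\sigma(T)\,u(T)=\int_0^T \sigma(t)\Big(\frac{\partial L}{\partial x^i}\,\delta x^i+\frac{\partial L}{\partial y^\alpha}\,\delta y^\alpha\Big)\,dt.
\end{equation}
Since $\sigma(T)\neq 0$, the extremality condition $u(T)=0$ is equivalent to the vanishing of the integral on the right. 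I would then substitute $\delta x^i=\rho^i_{\ \alpha}\xi^\alpha$ and $\delta y^\gamma=\dot\xi^\gamma+C^\gamma_{\ \alpha\beta}y^\alpha\xi^\beta$, and integrate the term $\sigma\,\frac{\partial L}{\partial y^\gamma}\dot\xi^\gamma$ by parts. The boundary contribution vanishes because $\xi^\alpha(0)=\xi^\alpha(T)=0$, and the derivative falling on $\sigma$ reproduces $-\frac{\partial L}{\partial z}\sigma\,\frac{\partial L}{\partial y^\gamma}\xi^\gamma$, which is the origin of the dissipative term. Collecting all coefficients of $\xi^\alpha$ and factoring out the common nonvanishing $\sigma(t)$, the fundamental lemma of the calculus of variations (valid since $\xi^\alpha$ are arbitrary with fixed endpoints) yields \eqref{eq:ELH_equations}.

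The main obstacle is less computational than conceptual: I must justify that the admissible variations are rich enough, i.e.\ that every time-dependent section $\xi(t)$ with $\xi(0)=\xi(T)=0$ actually arises from a genuine admissible $E$-homotopy satisfying \eqref{eq:herglotz_admissible} for each $s$, so that the fundamental lemma can be applied to conclude arbitrariness of $\xi^\alpha$. This is the content of the variational theory on Lie algebroids of Martínez recalled in Section~\ref{sec:preliminaries}; once the correspondence between $\xi$ and admissible $E$-homotopies is invoked, together with the requirement that $z_s(0)$ stays fixed (giving $u(0)=0$), the remaining steps are the routine integrating-factor and integration-by-parts manipulations above. The secondary point requiring care is tracking the extra term generated when the time derivative in the integration by parts hits the integrating factor $\sigma$, since this is exactly the $-\frac{\partial L}{\partial z}\frac{\partial L}{\partial y^\alpha}$ contribution distinguishing the Herglotz equations from their conservative counterpart.
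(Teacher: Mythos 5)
Your proposal is correct and follows essentially the same route as the paper's proof: differentiating the Herglotz constraint to obtain the linear ODE for $\delta z$, solving it with the integrating factor $\exp\!\big(-\int_0^t \tfrac{\partial L}{\partial z}\,d\omega\big)$, substituting the variation formulas $\delta x^i=\rho^i_{\ \alpha}\xi^\alpha$ and $\delta y^\gamma=\dot\xi^\gamma+C^\gamma_{\ \alpha\beta}y^\alpha\xi^\beta$, integrating by parts, and applying the fundamental lemma, with the converse handled by the same chain of equivalences. Your explicit attention to the richness of admissible variations (via Mart\'inez's theory) is a point the paper leaves implicit, and the only detail glossed over in your ``collecting coefficients'' step is the index relabeling using the skew-symmetry $C^\gamma_{\ \alpha\beta}=-C^\gamma_{\ \beta\alpha}$, which the paper carries out explicitly.
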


\begin{proof}
Let $(x^{i}_s(t),y^{\alpha}_s(t),z_s(t))$ be an admissible variation, i.e.
for each $s$, $\dot x_s^i(t) = \rho^i_{\ \alpha}(x_s(t))\,y_s^\alpha(t)$ and $
\dot z_s(t) = L(x^{i}_s(t),y^{\alpha}_s(t),z_s(t))$, with $x^{i}_s(0)=x^{i}(0)$, $x^{i}_s(T)=x^{i}(T)$, $z_s(0)=z(0)$.
Define $\delta x^i = \partial_s x_s^i|_{s=0}$, 
$\delta y^\alpha=\partial_s y_s^\alpha|_{s=0}$, 
$\delta z = \partial_s z_s|_{s=0}$. Differentiating the Herglotz equation with respect to $s$:
\[
\frac{d}{dt}(\delta z)
= \frac{\partial L}{\partial x^i}\,\delta x^i
+ \frac{\partial L}{\partial y^\alpha}\,\delta y^\alpha
+ \frac{\partial L}{\partial z}\,\delta z.
\]
This is a linear ODE in $\delta z$. Rearranging:
\[
\frac{d}{dt}(\delta z) - \frac{\partial L}{\partial z}\,\delta z
= \frac{\partial L}{\partial x^i}\,\delta x^i
+ \frac{\partial L}{\partial y^\alpha}\,\delta y^\alpha.
\]
Let
\[
\lambda(t) := \exp\!\left(-\int_0^t \frac{\partial L}{\partial z}(\omega)\,d\omega\right).
\]
Then
\[
\frac{d}{dt} \big(\lambda \delta z\big)
= \lambda \left[\frac{d}{dt}(\delta z) - \frac{\partial L}{\partial z}\,\delta z\right]
= \lambda \left(
\frac{\partial L}{\partial x^i}\,\delta x^i
+ \frac{\partial L}{\partial y^\alpha}\,\delta y^\alpha
\right).
\]
Integrating both sides of the previous equation from $0$ to $T$ and using $\delta z(0)=0$ (since $z(0)$ is fixed),
we get
\[
\lambda(T)\,\delta z(T)
= \int_0^T \lambda(t)
\left(
\frac{\partial L}{\partial x^i}\,\delta x^i
+ \frac{\partial L}{\partial y^\alpha}\,\delta y^\alpha
\right) \,dt.
\]
Since $\lambda(T)\neq 0$, the stationarity condition 
$\delta z(T)=0$ is equivalent to
\begin{equation}\label{eq:stationary_integral}
\int_0^T \lambda(t)
\left(
\frac{\partial L}{\partial x^i}\,\delta x^i
+ \frac{\partial L}{\partial y^\alpha}\,\delta y^\alpha
\right) \,dt = 0.
\end{equation}

Now, we express $(\delta x^{i},\delta y^{\alpha})$ in terms of a time-dependent
section $\xi(t) = \xi^\alpha(t)\,e_\alpha|_{x(t)}$ along $x(t)$, via:
\[
\delta x^i(t) = \rho^i_{\ \alpha}(x^{i}(t))\,\xi^\alpha(t),
\qquad
\delta y^\gamma (t)= \dot \xi^\gamma (t) + C^\gamma_{\ \alpha\beta}(x^{i}(t))\,y^\alpha (t) \,\xi^\beta (t).
\]
Substituting into \eqref{eq:stationary_integral}, we obtain
\[
\int_0^T \lambda(t)
\left[
\frac{\partial L}{\partial x^i}\,\rho^i_{\ \alpha}\,\xi^\alpha
+ \frac{\partial L}{\partial y^\gamma}
\left(
\dot \xi^\gamma + C^\gamma_{\ \alpha\beta}y^\alpha\,\xi^\beta
\right)
\right] \,dt = 0.
\]
where we drop the functions inputs to simplify the reading. Rewriting:
\[
\int_0^T \left[
\lambda \frac{\partial L}{\partial x^i}\,\rho^i_{\ \alpha}\,\xi^\alpha
+ \lambda \frac{\partial L}{\partial y^\gamma} C^\gamma_{\ \alpha\beta}y^\alpha\,\xi^\beta
+ \lambda \frac{\partial L}{\partial y^\gamma} \dot \xi^\gamma
\right] dt = 0.
\]

We integrate by parts the term with $\dot \xi^\gamma$:
\[
\int_0^T \lambda \frac{\partial L}{\partial y^\gamma} \dot \xi^\gamma\, dt
= \left[ \lambda \frac{\partial L}{\partial y^\gamma} \xi^\gamma \right]_0^T
- \int_0^T \frac{d}{dt}\Big(\lambda \frac{\partial L}{\partial y^\gamma}\Big)
\,\xi^\gamma \,dt.
\]
The boundary term vanishes because $\xi^{\gamma}(0)=\xi^{\gamma}(T)=0$ (fixed endpoints).
Hence the stationarity condition becomes
\[
\int_0^T \left\{
\lambda \frac{\partial L}{\partial x^i}\,\rho^i_{\ \alpha}
- \lambda \frac{\partial L}{\partial y^\gamma} C^\gamma_{\ \alpha\beta}y^\beta
- \frac{d}{dt}\Big(\lambda \frac{\partial L}{\partial y^\alpha}\Big)
\right\} \xi^\alpha \, dt = 0.
\]
where the middle term changes sign due to the skew-symmetry of the structure functions, i.e., $C_{\alpha \beta}^{\gamma}=-C_{\beta \alpha}^{\gamma}$. Since $\xi^\alpha(t)$ are arbitrary functions vanishing at $t=0,T$,
we deduce the pointwise condition
\[
- \frac{d}{dt}\Big(\lambda \frac{\partial L}{\partial y^\alpha}\Big)
+ \lambda \frac{\partial L}{\partial x^i}\,\rho^i_{\ \alpha}
- \lambda \frac{\partial L}{\partial y^\gamma} C^\gamma_{\ \alpha\beta}y^\beta
= 0.
\]
Expanding the derivative:
\[
\frac{d}{dt}\Big(\lambda \frac{\partial L}{\partial y^\alpha}\Big)
= \dot \lambda \frac{\partial L}{\partial y^\alpha}
+ \lambda \frac{d}{dt}\left( \frac{\partial L}{\partial y^\alpha}\right).
\]
and using $\dot\lambda = -(\partial L/\partial z)\lambda$ by definition of $\lambda$,
we obtain
\[
\dot \lambda \frac{\partial L}{\partial y^\alpha}
= -\lambda \frac{\partial L}{\partial z}\frac{\partial L}{\partial y^\alpha}.
\]
Hence,
\[
- \left[
-\lambda \frac{\partial L}{\partial z}\frac{\partial L}{\partial y^\alpha}
+ \lambda \frac{d}{dt}\left( \frac{\partial L}{\partial y^\alpha}\right)
\right]
+ \lambda \frac{\partial L}{\partial x^i}\,\rho^i_{\ \alpha}
- \lambda \frac{\partial L}{\partial y^\gamma} C^\gamma_{\ \alpha\beta}y^\beta
= 0.
\]
Simplifying and dividing by $\lambda$, which is a nowhere vanishing function, we conclude
\[
\frac{d}{dt}\left(\frac{\partial L}{\partial y^\alpha}\right)
+ C^\gamma_{\ \alpha\beta}(x)\,y^\beta \frac{\partial L}{\partial y^\gamma}
- \rho^i_{\ \alpha}(x)\,\frac{\partial L}{\partial x^i}
- \frac{\partial L}{\partial z}\,\frac{\partial L}{\partial y^\alpha}
= 0.
\]
These are the desired Euler--Lagrange--Herglotz equations \eqref{eq:ELH_equations}.

The converse (that any solution of \eqref{eq:ELH_constraints}--\eqref{eq:ELH_equations}
is stationary) follows by reversing the above computations, showing that then
the integral \eqref{eq:stationary_integral} vanishes for all admissible
variations.
\end{proof}

\begin{remark}
It is worth emphasizing that the Euler--Lagrange--Herglotz equations 
\eqref{eq:ELH_equations} admit a fully geometric interpretation
in terms of the contact and Jacobi structures developed in 
\cite{anahory2024euler,anahory2025contact}.  
In this sense, Theorem~\ref{thm:ELH_algebroid} provides the local expression of the same intrinsic Herglotz dynamics obtained in those works via 
prolongations and contact Lie algebroids.

More precisely, in \cite{anahory2024euler} the authors construct a canonical Jacobi structure on $E^{*}\times\mathbb{R}$ and derive the contact Lagrangian equations pulling back the Hamiltonian equations to $E\times \mathbb{R}$.

Likewise, the contact Lie algebroid formalism in \cite{anahory2025contact} derives the intrinsic Herglotz equations using the prolongation 
$\mathcal{T}^{E}(E\times\mathbb{R})$ and the algebroid differential 
$\mathrm{d} = d^{\mathcal{T}^{E}(E\times\mathbb{R})}$.  
In particular, the solutions of the Herglotz equations are admissible trajectories of a SODE section $\Gamma_{L}:E\times \mathbb{R} \to \mathcal{T}^{E}(E\times\mathbb{R})$ determined by the equations
\[
\iota_{\Gamma_{L}}\,\mathrm{d}(\eta_{L})
 = \mathrm{d}(E_{L}) - R_{L}(E_{L})\,\eta_{L},
\qquad
\iota_{\Gamma_{L}}\eta_{L}=-E_{L},
\]
where $\eta_{L}:E\times \mathbb{R} \to (\mathcal{T}^{E}(E\times\mathbb{R}))^{*}$ is the contact section associated with $L$, the function $E_{L}$ is the corresponding energy, the section $R_{L}:E\times \mathbb{R} \to \mathcal{T}^{E}(E\times\mathbb{R})$ is the Reeb section uniquely determined by the conditions 
$\iota_{R_{L}}\,\mathrm{d} \eta_{L}=0$ and $\iota_{R_{L}}\eta_{L}=1$, and satisfying $R_{L}(E_{L})=-\frac{\partial L}{\partial z}$.


Therefore, Theorem~\ref{thm:ELH_algebroid} complements the geometric constructions of 
\cite{anahory2024euler,anahory2025contact} by providing a coordinate variational formulation of the Herglotz dynamics.\hfill$\diamond$
\end{remark}


\begin{example}[Euler--Lagrange--Herglotz equations]

Let $E=TQ$ be the tangent bundle of a configuration manifold $Q$.
In this case, we choose local coordinates $(q^i)$ and induced coordinates
$(q^i,\dot q^i)$ on $TQ$, i.e., $x^i = q^i, y^i = \dot q^i$, and the Lie algebroid structure is trivial. Then the Lagrangian is $L=L(q,\dot q,z)$, and the Euler--Lagrange--Herglotz equations \eqref{eq:ELH_equations} become
\[
\frac{d}{dt}\left(\frac{\partial L}{\partial \dot q^i}\right)
- \frac{\partial L}{\partial q^i}
- \frac{\partial L}{\partial z}\,\frac{\partial L}{\partial \dot q^i}
= 0, \quad \dot z = L(q,\dot q,z),
\]
which are the classical Euler--Lagrange--Herglotz equations.
\end{example}

\begin{example}[Hamel-Herglotz equations on Lie algebroids]
The Euler--Lagrange--Herglotz equations on a Lie algebroid 
$(E,\rho,[\cdot,\cdot])$, constitute the general formulation of Herglotz mechanics on a Lie algebroid. The classical Hamel--Herglotz equations arise as the special case 
$E=TQ$, where one chooses a (possibly nonholonomic) local frame 
$\{e_\alpha\}$ on $TQ$ with quasi-velocities 
$y^\alpha = a_i^\alpha(q)\,\dot q^i$.
In this case the anchor and the structure functions satisfy
\[
\rho_\alpha^{\ i}=a_\alpha^{\ i}, \qquad 
C_{\alpha\beta}^{\ \ \gamma}=\Gamma_{\alpha\beta}^{\ \ \gamma},
\]
where $\Gamma_{\alpha\beta}^{\ \ \gamma}$ are the commutation functions 
of the frame.
Substituting these expressions into 
\eqref{eq:ELH_equations} yields exactly the
Hamel--Herglotz equations
\begin{equation}
\frac{d}{dt}\!\left(\frac{\partial L}{\partial y^\alpha}\right)
- a_\alpha^i\,\frac{\partial L}{\partial q^i}
+ \Gamma_{\alpha\beta}^{\ \ \gamma} y^\beta 
\frac{\partial L}{\partial y^\gamma}
- \frac{\partial L}{\partial z}\,\frac{\partial L}{\partial y^\alpha}=0.
\end{equation}

\end{example}

\begin{example}[Euler--Poincar\'e--Herglotz equation on a Lie algebra]

Let $E=\mathfrak{g}$ be a Lie algebra, regarded as a Lie algebroid over a point.
Thus $M=\{\ast\}$, the anchor vanishes, and an admissible curve in $E$ is simply
a curve $t \mapsto \xi(t) \in \mathfrak{g}$. Choose a basis $\{e_\alpha\}$ of $\mathfrak{g}$ with brackets
$[e_\alpha,e_\beta]=C_{\alpha\beta}^{\ \ \gamma}e_\gamma$.

A Herglotz Lagrangian on $\mathfrak{g}$ is a function
$\ell(\xi,z)$ with $(\xi,z)\in \mathfrak{g}\times\mathbb{R}$. Writing $\xi=\xi^\alpha e_\alpha$, the
Euler--Lagrange--Herglotz equations reduce to
\[
\frac{d}{dt}\!\left(\frac{\partial \ell}{\partial \xi^\alpha}\right)
+ C_{\alpha\beta}^{\ \ \gamma}\,\xi^\beta 
   \frac{\partial \ell}{\partial \xi^\gamma}
= 
\frac{\partial \ell}{\partial z}\,
   \frac{\partial \ell}{\partial \xi^\alpha},
\qquad
\dot z = \ell(\xi,z).
\]

Identifying the covector 
\[
\frac{\partial \ell}{\partial \xi}
:= \frac{\partial \ell}{\partial \xi^\alpha}\,e^{\alpha}
\in \mathfrak{g}^*,
\]
the equation takes the intrinsic Euler--Poincar\'e--Herglotz form \cite{anahory2024reduction}
\[
\frac{d}{dt}\left(\frac{\partial \ell}{\partial \xi}\right)
+ \operatorname{ad}^*_{\xi}\!\left(\frac{\partial \ell}{\partial \xi}\right)
= 
\frac{\partial \ell}{\partial z}\,
\frac{\partial \ell}{\partial \xi}\qquad
\dot z = \ell(q,\xi,z).
\]

\end{example}

\begin{example}[Euler--Poincar\'e--Herglotz equations on an action Lie algebroid]\label{ex:EPH_action_intrinsic}

Let $G$ be a Lie group with Lie algebra $\mathfrak{g}$ acting on a manifold
$Q$ from the left. The associated action Lie algebroid is $E = Q \times \mathfrak{g} \;\to\; Q$, with anchor and bracket given by $\rho(q,\xi) = \xi_Q(q)$, $[(q,\xi),(q,\eta)] = (q,[\xi,\eta])$, where $\xi_Q$ denotes the infinitesimal generator of $\xi\in\mathfrak{g}$.

Choose a basis $\{e_\alpha\}$ of $\mathfrak{g}$, with
$[e_\alpha,e_\beta]=C_{\alpha\beta}^{\ \ \gamma}e_\gamma$, and write
$\xi = \xi^\alpha e_\alpha$. The anchor in local coordinates reads
\[
\rho(q,e_\alpha) = \rho_\alpha^i(q)\,\frac{\partial}{\partial q^i}
= (e_\alpha)_Q(q),
\]
so that the admissibility condition for a curve $(q(t),\xi(t))$ becomes $\dot q^i(t) = \rho_\alpha^{\,i}\big(q(t)\big)\,\xi^\alpha(t)$.

Taking the Herglotz Lagrangian on the action Lie algebroid $\ell(q,\xi,z)$ with $(q,\xi,z)\in Q\times\mathfrak{g}\times\mathbb{R}$, the Euler--Lagrange--Herglotz equations
\eqref{eq:ELH_equations} become
\begin{equation}\label{eq:EPH_action_coords}
\frac{d}{dt}\!\left(\frac{\partial \ell}{\partial \xi^\alpha}\right)
+ C_{\alpha\beta}^{\ \ \gamma}\,\xi^\beta 
   \frac{\partial \ell}{\partial \xi^\gamma}
- \rho_\alpha^{\,i}(q)\,\frac{\partial \ell}{\partial q^i}
=
\frac{\partial \ell}{\partial z}\,
   \frac{\partial \ell}{\partial \xi^\alpha},
\qquad
\dot z = \ell(q,\xi,z),
\end{equation}
together with the kinematic relation $\dot q^i = \rho_\alpha^{\,i}(q)\,\xi^\alpha$.


Let $J\colon T^*Q\to\mathfrak{g}^*$ be the standard momentum map for the
cotangent-lifted action, characterized by
$\big\langle J(q,p_q),\eta\big\rangle 
= \big\langle p_q,\eta_Q(q)\big\rangle$ with $\eta\in\mathfrak{g}$. Then, \eqref{eq:EPH_action_coords} can be written intrinsically as the
Euler--Poincar\'e--Herglotz equations on the action Lie algebroid
\begin{equation}\label{eq:EPH_action_intrinsic}
\frac{d}{dt}\left(\frac{\partial \ell}{\partial \xi}\right)
+ \operatorname{ad}^*_{\xi}\!\left(\frac{\partial \ell}{\partial \xi}\right)
= J\!\left(\frac{\partial \ell}{\partial q}\right)
+ \frac{\partial \ell}{\partial z}\,\frac{\partial \ell}{\partial \xi},
\end{equation}
with
\[
\dot q = \xi_Q(q), \qquad \dot z = \ell(q,\xi,z).
\]
\end{example}

\begin{example}[Lagrange--Poincaré--Herglotz equations on the Atiyah algebroid]
\label{example:LPH-Atiyah}

Let $\pi\colon Q\to M:=Q/G$ be a principal $G$-bundle, with $\dim M=n$ and 
$\dim G=d$.  
The quotient $\widehat{TQ}=TQ/G\to M$ is the Atiyah algebroid.  
Choose a local trivialization of $Q$ and let 
$\{e_i,\widehat e_A\}$ be the induced local $G$-invariant frame of $TQ/G$ 
(see \cite{anahory2024euler}, \cite{simoes2024symmetry}, \cite{anahory2025contact} for details).  
Then the anchor and bracket are
\[
\rho(e_i)=\frac{\partial}{\partial q^i},
\qquad 
\rho(\widehat e_A)=0,
\]
\[
[e_i,e_j] = -\mathcal{B}_{ij}^{A}\,\widehat e_A,\qquad
[e_i,\widehat e_A] = c_{AB}^{\ \ C}\,\mathcal{A}_{i}^{B}\,\widehat e_C,\qquad
[\widehat e_A,\widehat e_B]=c_{AB}^{\ \ C}\,\widehat e_C,
\]
where $\mathcal{A}_i^A$ and $\mathcal{B}_{ij}^A$ are the local coefficients of the principal 
connection and its curvature, and $c_{AB}^{\ \ C}$ are the structure constants of $\mathfrak g$.

Let $L=L(q^i,\dot q^i,v^A,z)\colon \widehat{TQ}\times\mathbb{R}\to\mathbb{R}$ be the reduced Herglotz Lagrangian induced by a $G$-invariant Lagrangian on 
$TQ\times\mathbb{R}$.  
Applying Theorem~\ref{thm:ELH_algebroid} to this Lie algebroid yields the 
Lagrange--Poincaré--Herglotz equations:
\begin{align}
\frac{\partial L}{\partial q^{j}}
 - \frac{d}{dt}\!\left(\frac{\partial L}{\partial \dot{q}^{j}}\right)
&=
\frac{\partial L}{\partial v^{A}}
\left(\mathcal{B}_{ij}^{A}\,\dot{q}^{i}
+ c_{DB}^{\ \ A}\,\mathcal{A}_{j}^{B}\,v^{D}\right)
 - \frac{\partial L}{\partial z}\,\frac{\partial L}{\partial \dot{q}^{j}},
\qquad j=1,\dots,n,
\label{eq:LPH-horizontal-Atiyah}
\\[0.4em]
\frac{d}{dt}\!\left(\frac{\partial L}{\partial v^{B}}\right)
&=
\frac{\partial L}{\partial v^{A}}
\left(c_{DB}^{\ \ A}\,v^{D}
 - c_{DB}^{\ \ A}\,\mathcal{A}_{i}^{D}\,\dot{q}^{i}\right)
 + \frac{\partial L}{\partial z}\,\frac{\partial L}{\partial v^{B}},
\qquad B=1,\dots,d,
\label{eq:LPH-vertical-Atiyah}
\\[0.4em]
\dot{z} &= L(q^i,\dot q^i,v^A,z).
\label{eq:LPH-contact-Atiyah}
\end{align}

These are the \emph{Lagrange--Poincaré--Herglotz equations} on the Atiyah algebroid.  
They coincide with the reduced contact Lagrange--Poincaré equations previously obtained in 
\cite{anahory2024euler}, \cite{simoes2024symmetry}, \cite{anahory2025contact}.
\end{example}

\begin{example}[Dissipative Wong's equations]\label{wong} To illustrate the above Lagrange--Poincaré--Herglotz equations, we consider a dissipative version of the classical Wong equations. These equations arise, for instance, in the dynamics of a charged particle moving in a Yang–Mills field and in the geometric analysis of the falling cat problem (see \cite{cendra2001lagrangian} and references therein). In both situations, the reduced dynamics is governed by the Atiyah algebroid associated with a principal $G$-bundle, and the internal variables evolve in the adjoint bundle $Ad(Q)$.

The Herglotz framework provides a natural geometric mechanism to incorporate dissipative effects in these models. A term linear in the contact variable $z$ produces, after reduction, a controlled decay of the horizontal and vertical momenta, in accordance with the notion of dissipated quantities introduced in \cite{de2021symmetries}. Physically, this allows one to model frictional or damping interactions with a surrounding medium (as in the case of a particle in a non-ideal Yang–Mills environment) and, in the context of the falling cat, the gradual loss of internal rotational energy due to elastic or viscoelastic couplings inside the body. In this way, Herglotz-type dissipation becomes compatible with the symmetry reduction and yields a gauge-invariant dissipative extension of Wong’s equations.

Let $(M, g_{M})$ be a Riemannian manifold, $G$ a compact Lie group endowed
with a bi-invariant Riemannian metric $\kappa$, and let $\pi : Q \to M$ be a
principal $G$-bundle with Lie algebra $\mathfrak{g}$. Let 
$\mathcal{A}: TQ \to {\mathfrak g}$ be a principal connection with curvature
$B:TQ\oplus TQ\to\mathfrak{g}$. Using $\mathcal{A}$ one gets an identification $T_qQ \;\cong\; T_{\pi(q)}M \oplus \mathfrak{g}$ with $q\in Q$, and the metrics $g_M$ and $\kappa$ jointly induce a $G$-invariant Riemannian
metric $g_Q$ on $Q$.

We consider the \emph{kinetic Herglotz Lagrangian}
$L:TQ\times\mathbb{R}\to\mathbb{R}$ given by
\begin{equation}\label{eq:Wong-unreduced-L}
L(v_q,z)
= \frac{1}{2}\,\Big(
\kappa_e\big(\mathcal{A}(v_q),\mathcal{A}(v_q)\big)
+ g_{M,\pi(q)}\big((T_q\pi)(v_q),(T_q\pi)(v_q)\big)
\Big) - \gamma\,z,
\end{equation}
for $v_q\in T_qQ$, where $e\in G$ is the identity and $\gamma>0$ is a
dissipation parameter. Clearly $L$ is hyperregular and $G$-invariant.

Since $g_Q$ is $G$-invariant, it induces a fiber metric $g_{TQ/G}$ on the
Atiyah algebroid $\widehat{TQ}=TQ/G\to M=Q/G$.  
The reduced Herglotz Lagrangian $L_{\mathrm{red}}\colon \widehat{TQ}\times\mathbb{R}\to\mathbb{R}$ is then the kinetic energy of $g_{TQ/G}$ minus the linear term in $z$:
\begin{equation}\label{eq:reduced-lagrangian-Wong}
L_{\mathrm{red}}([v_q],z)
= \frac{1}{2}\,\Big(
\kappa_e\big(\mathcal{A}(v_q),\mathcal{A}(v_q)\big)
+ g_{M,\pi(q)}\big((T_q\pi)(v_q),(T_q\pi)(v_q)\big)
\Big) - \gamma\,z,
\end{equation}
for $[v_q]\in TQ/G$. The Legendre transform associated with 
$L_{\mathrm{red}}$ is the bundle isomorphism $([v_q],s)\;\mapsto\;
\big(\flat_{g_{TQ/G}}([v_q]),s\big)$, where $\flat_{g_{TQ/G}}:TQ/G\to T^*Q/G$ is induced by the fiber metric $g_{TQ/G}$.

Let us choose a local trivialization $\pi^{-1}(U)\simeq U\times G$ of the
principal bundle $\pi:Q\to M$, where $U\subset M$ has local coordinates
$(q^i)$. Let $\{\xi_A\}$ be a basis of $\mathfrak{g}$ with structure
constants $c_{AB}^{\ \ D}$, let $\mathcal{A}_i^A$ and $\mathcal{B}_{ij}^A$
be the local components of $\mathcal{A}$ and its curvature $B$, and write $\kappa_e = \kappa_{AB}\,\xi^A\otimes\xi^B$ and $g_M = g_{ij}\,dq^i\otimes dq^j$, with $\{\xi^A\}$ the dual basis of $\{\xi_A\}$. The bi-invariance of $\kappa$
implies the identity $c_{AB}^{\ \ D}\,\kappa_{DE} = c_{AE}^{\ \ D}\,\kappa_{DB}$.

Denote by $\{e_i,\widehat{\xi}_A\}$ the associated $G$-invariant local frame
on $Q$ and by $(q^i,\dot q^i,v^A,z)$ the induced local fibred coordinates
on $\widehat{TQ}\times\mathbb{R}$. In these coordinates the reduced
Lagrangian \eqref{eq:reduced-lagrangian-Wong} becomes
\begin{equation}\label{eq:Wong-reduced-L}
L_{\mathrm{red}}(q^i,\dot q^i,v^A,z)
= \frac{1}{2}\,
\big(\kappa_{AB}v^A v^B + g_{ij}\dot q^i\dot q^j\big) - \gamma\,z.
\end{equation}
The Hessian of $L_{\mathrm{red}}$ with respect to $(\dot q^i,v^A)$ is $\displaystyle{W_{L_{\mathrm{red}}}
= \begin{pmatrix}
g_{ij} & 0\\[2pt]
0 & \kappa_{AB}
\end{pmatrix}}$, hence $L_{\mathrm{red}}$ is hyperregular.

Applying the Lagrange--Poincar\'e--Herglotz equations
\eqref{eq:LPH-horizontal-Atiyah}--\eqref{eq:LPH-contact-Atiyah} to
\eqref{eq:Wong-reduced-L}, we obtain the \emph{dissipative Wong equations}
on the base $M$:
\begin{align}
\frac{\partial g_{im}}{\partial q^{j}}\,\dot q^{i}\dot q^{m}
 - \frac{\partial g_{ij}}{\partial q^{k}}\,\dot q^{k}\dot q^{i}
 - g_{ij}\,\ddot q^{i}
&=
\kappa_{AB}\,v^{B}
\Big(\mathcal{B}_{ij}^{A}\,\dot q^{i}
 + c_{DB}^{\ \ A}\,\mathcal{A}_{j}^{B}\,v^{D}\Big)
 + \gamma\,g_{ij}\,\dot q^{i},
\label{eq:Wong-dissipative-horizontal}
\\[0.3em]
\kappa_{AB}\,\dot v^{A}
&=
\kappa_{AE}v^{E}
\Big(
c_{DB}^{\ \ A}\,v^{D}
 - c_{DB}^{\ \ A}\,\mathcal{A}_{i}^{D}\,\dot q^{i}
\Big)
 - \gamma\,\kappa_{AB}\,v^{A},
\label{eq:Wong-dissipative-vertical}
\end{align}
together with the contact equation $\dot z = L_{\mathrm{red}}(q^i,\dot q^i,v^A,z)$.

\end{example}

\section{Energy Balance and Noether--Herglotz Theorems}\label{section5}

In this section we study how the presence of a Herglotz (or contact-type)
term modifies the usual conservation laws associated with a Lagrangian
system on a Lie algebroid.  
While in the conservative setting, energy and Noether momenta are
constants of motion, here they satisfy first-order balance laws driven by
the derivative $\partial L/\partial z$ of the Herglotz Lagrangian with
respect to the contact variable.  We show that, after multiplication by a
canonical integrating factor, these quantities become time-independent
\emph{dissipated invariants} in the sense of \cite{bravetti2017contact}, \cite{gaset2020new}, \cite{de2021symmetries}.

\subsection{Energy balance law}

Next, we introduce the \emph{Herglotz energy}
associated with a Lagrangian $L\colon E\times\mathbb{R}\to\mathbb{R}$
and derive a general energy balance law along solutions of the
Euler--Lagrange--Herglotz equations.
We interpret the resulting rescaled conserved quantity and illustrate the
construction in the classical case $E=TQ$ by means of a mechanical
system with Herglotz-type Rayleigh dissipation, for which the mechanical
energy decays exponentially in time.

Given the coordinates $(x^{i},y^{\alpha},z)$ on $E\times\mathbb{R}$ adapted to a basis of sections of $E$, the \emph{Herglotz energy} is the smooth function on $E\times\mathbb{R}$ locally defined by
\begin{equation}\label{eq:energy_def}
E(x^{i},y^{\alpha},z) = \frac{\partial L}{\partial y^\alpha}(x^{i},y^{\alpha},z)\,y^\alpha - L(x^{i},y^{\alpha},z).
\end{equation}
The previous expression is the local version of the energy function as defined on \cite{anahory2024euler,anahory2025contact}.

\begin{proposition}[Energy balance law]
\label{prop:energy_balance}
Let $(x^{i}(t),y^{\alpha}(t),z(t))$ be a solution of the Herglotz equations
\eqref{eq:ELH_constraints}--\eqref{eq:ELH_equations}.  
Then, the energy $E$ satisfies
\begin{equation}\label{eq:energy_balance}
\dot E = \frac{\partial L}{\partial z}\,E.
\end{equation}
Equivalently,
\[
\frac{d}{dt}\left(
e^{-\int_0^t \frac{\partial L}{\partial z}(\omega)\,d\omega}\,E(t)
\right)=0
\]
along the trajectories.
\end{proposition}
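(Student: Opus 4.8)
The plan is to differentiate the Herglotz energy $E=\frac{\partial L}{\partial y^\alpha}\,y^\alpha - L$ directly in time along a solution and then substitute the two structural relations available: the Euler--Lagrange--Herglotz equations \eqref{eq:ELH_equations} for the rate of change of the momenta $\partial L/\partial y^\alpha$, and the constraints \eqref{eq:ELH_constraints} for $\dot x^i$ and $\dot z$. Applying the product rule and the chain rule to $\dot L$, I would first obtain
\[
\dot E
= \frac{d}{dt}\!\Big(\frac{\partial L}{\partial y^\alpha}\Big)\,y^\alpha
+ \frac{\partial L}{\partial y^\alpha}\,\dot y^\alpha
- \frac{\partial L}{\partial x^i}\,\dot x^i
- \frac{\partial L}{\partial y^\alpha}\,\dot y^\alpha
- \frac{\partial L}{\partial z}\,\dot z,
\]
where the two terms in $\dot y^\alpha$ cancel at once --- this is the familiar reason the energy carries no acceleration dependence.

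Next I would eliminate $\dot x^i=\rho^i_{\ \alpha}y^\alpha$ and $\dot z=L$ using \eqref{eq:ELH_constraints}, and replace $\frac{d}{dt}(\partial L/\partial y^\alpha)$ by the expression read off from \eqref{eq:ELH_equations},
\[
\frac{d}{dt}\!\Big(\frac{\partial L}{\partial y^\alpha}\Big)
= \rho^i_{\ \alpha}\,\frac{\partial L}{\partial x^i}
- C^\gamma_{\ \alpha\beta}\,y^\beta\,\frac{\partial L}{\partial y^\gamma}
+ \frac{\partial L}{\partial z}\,\frac{\partial L}{\partial y^\alpha}.
\]
After contracting with $y^\alpha$, the contributions $\rho^i_{\ \alpha}y^\alpha\,\partial L/\partial x^i$ produced by the equations of motion cancel the $\partial L/\partial x^i\,\dot x^i$ term, leaving only the bracket term, the $z$-term $\tfrac{\partial L}{\partial z}\tfrac{\partial L}{\partial y^\alpha}y^\alpha$, and $-\tfrac{\partial L}{\partial z}L$.

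The single nontrivial cancellation --- the only place where the algebroid bracket enters beyond the anchor --- is that $C^\gamma_{\ \alpha\beta}\,y^\alpha y^\beta\,\partial L/\partial y^\gamma$ vanishes, because $C^\gamma_{\ \alpha\beta}=-C^\gamma_{\ \beta\alpha}$ is skew in its lower indices while $y^\alpha y^\beta$ is symmetric (exactly the same mechanism used in the proof of Theorem~\ref{thm:ELH_algebroid}). What survives is precisely $\frac{\partial L}{\partial z}\big(\frac{\partial L}{\partial y^\alpha}y^\alpha - L\big)=\frac{\partial L}{\partial z}\,E$, which is \eqref{eq:energy_balance}. The integrated form is then immediate: setting $\mu(t)=\exp\!\big(-\int_0^t (\partial L/\partial z)\,d\omega\big)$, one has $\dot\mu=-(\partial L/\partial z)\,\mu$, so $\frac{d}{dt}(\mu E)=\mu\big(\dot E-(\partial L/\partial z)\,E\big)=0$. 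I do not anticipate any genuine obstacle: the argument is a direct algebraic verification, and the integrating factor $\mu$ coincides with the factor $\lambda$ already introduced in the derivation of the Euler--Lagrange--Herglotz equations.
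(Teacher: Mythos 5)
Your proposal is correct and follows essentially the same route as the paper's proof: differentiate $E$ along a solution, cancel the $\dot y^\alpha$ terms, substitute the constraints $\dot x^i=\rho^i_{\ \alpha}y^\alpha$ and $\dot z=L$ together with the Euler--Lagrange--Herglotz equations, and kill the bracket term via the skew-symmetry $C^\gamma_{\ \alpha\beta}=-C^\gamma_{\ \beta\alpha}$ contracted against the symmetric $y^\alpha y^\beta$. The integrating-factor argument for the equivalent conserved form is likewise identical to the paper's.
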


\begin{proof}
By definition,
\[
E=\frac{\partial L}{\partial y^\alpha}y^\alpha - L.
\]
Differentiating and cancelling the terms 
$\frac{\partial L}{\partial y^\alpha}\dot y^\alpha$ gives
\[
\dot E 
= 
\frac{d}{dt}\left(\frac{\partial L}{\partial y^\alpha}\right)y^\alpha
- \frac{\partial L}{\partial x^i}\dot x^i
- \frac{\partial L}{\partial z}\dot z.
\]
Using $\dot x^i=\rho^i_{\ \alpha}y^\alpha$, we obtain
\[
\dot E
=
\left(
\frac{d}{dt}\left(\frac{\partial L}{\partial y^\alpha}\right)
-
\rho^i_{\ \alpha}\frac{\partial L}{\partial x^i}
\right)y^\alpha
-
\frac{\partial L}{\partial z}\dot z.
\]

Inserting the Euler--Lagrange--Herglotz equations in the previous equations we obtain
\[
\dot E
=
\left(
- C^\gamma_{\ \alpha\beta}y^\beta\frac{\partial L}{\partial y^\gamma}
+ \frac{\partial L}{\partial z}\frac{\partial L}{\partial y^\alpha}
\right)y^\alpha
-
\frac{\partial L}{\partial z}\dot z.
\]

Due to the skew-symmetry of the structure functions with respect to the lower indices, we have that 
$C^\gamma_{\ \alpha\beta}y^\alpha y^\beta=0$ which yields
\[
\dot E
=
\frac{\partial L}{\partial z}\frac{\partial L}{\partial y^\alpha}y^\alpha
-
\frac{\partial L}{\partial z}\dot z.
\]

Thus, using the Herglotz constraint $\dot z=L$ we get
\[
\dot E
=
\frac{\partial L}{\partial z} \left(\frac{\partial L}{\partial y^\alpha}y^\alpha - L\right) = \frac{\partial L}{\partial z}\,E.
\]
\end{proof}
\begin{example}[Mechanical system with Herglotz-type Rayleigh dissipation]
Consider the standard Lie algebroid $E = TQ$ over a configuration manifold $Q$,
with local coordinates $(q^i,\dot q^i)$ on $TQ$.
Let $g$ be a Riemannian metric on $Q$ and $V \colon Q \to \mathbb{R}$ a potential.
We define the Herglotz-type Lagrangian
\[
L(q,\dot q,z)
= \frac{1}{2}\,g_{ij}(q)\,\dot q^i \dot q^j - V(q) - \gamma\,z,
\qquad \gamma > 0.
\]
This is the usual kinetic minus potential energy, perturbed by a linear 
Herglotz term in the contact variable $z$, which models Rayleigh-type 
dissipation.

The generalized Herglotz energy is
\[
E(q,\dot q,z)
= \frac{\partial L}{\partial \dot q^{i}}(q,\dot q,z)\,\dot q^{i} - L(q,\dot q,z)
= \frac{1}{2}\,g_{ij}(q)\,\dot q^i \dot q^j + V(q) + \gamma z.
\]

From Proposition~\ref{prop:energy_balance} we obtain the balance equation
\[
\dot E = \frac{\partial L}{\partial z}\,E = -\gamma\,E,
\]
and therefore
\[
E(t) = E(0)\,e^{-\gamma t}.
\]
Equivalently, the dissipated quantity
\[
E_{\mathrm{diss}}(t) 
:= e^{-\int_0^t \frac{\partial L}{\partial z}(\omega)\,d\omega}\, E(t)
= e^{\gamma t} E(t)
\]
is conserved.  
In the terminology of \cite{de2021symmetries}, 
$E_{\mathrm{diss}}$ is a \emph{dissipated invariant}.

\end{example}

\subsection{Noether--Herglotz theorem on a Lie algebroid}

In the following, we develop a Noether--type theorem for
Herglotz systems on Lie algebroids.  Given an infinitesimal symmetry of
$L$, we construct the corresponding \emph{Herglotz momentum} and prove
a Noether--Herglotz balance law.
 The momentum is no longer conserved but obeys an equation of the same
form as the energy balance, and its rescaled version is
constant.  We then apply this result to several examples: a rigid body
on a Lie algebra with exponentially damped body angular momentum, a class
of dissipative Wong-type systems where both the kinetic energy and the
momentum become dissipated invariants, and a thermoviscous rigid
body formulated as an Euler--Poincar\'e--Herglotz system, in which the
Herglotz variable acquires a thermodynamic interpretation as an entropy.

Next, we will define a symmetry generated by a section $\sigma\in\Gamma(E)$. Recall that given a section $\sigma$, with local expression $\sigma(x^{i})=\sigma^{\alpha}(x^{i})e_{\alpha}$, we may consider two special sections in the prolongation Lie algebroid $\mathcal{T}^{E}(E\times \mathbb{R})\to E\times \mathbb{R}$: the vertical lift $\sigma^{V}$
$$\sigma^{V}(v_{q},z)=\left(0_{q},\left.\frac{d}{dt}\right|_{t=0}(v_{q}+t\sigma(q))\right),$$
whose local expression is
$$\sigma^{V}(x^{i},y^{\alpha},z)=\left(0_{q},\sigma^{\alpha}(x^{i})\frac{\partial}{\partial y^{\alpha}}\right),$$
and the complete lift $\sigma^{C}$ whose local expression is
$$\sigma^{C}(x^{i},y^{\alpha},z)=\left( \sigma^{\alpha}e_{\alpha}, \rho^{i}_\alpha \sigma^{\alpha}\frac{\partial}{\partial x^{i}} + \left(\rho^{i}_\alpha y^{\alpha}\frac{\partial \sigma^{\beta}}{\partial x^{i}} + C^\beta_{\ \alpha\gamma}y^\alpha\sigma^\gamma \right)\frac{\partial}{\partial y^{\beta}} \right)$$

\begin{definition}
We say that $\sigma\in\Gamma(E)$ is an \emph{infinitesimal symmetry}
of the Herglotz Lagrangian $L$ if the flow generated by $\sigma^{C}$ on $E\times\mathbb{R}$ leaves $L$ invariant. Infinitesimally, this means $\mathcal{L}_{\sigma^{C}}^E L = 0$, where $\mathcal{L}_{\sigma^{C}}^E$ is the Lie derivative act\-ing on functions on $E$ via the algebroid structure, i.e.,
$$\mathcal{L}_{\sigma^{C}}^E L = \rho_{1}(\sigma^{C})L,$$
for any function $L$ on $E\times \mathbb{R}$.
\end{definition}

In local coordinates, the previous condition simply means that if $\sigma$ is an infinitesimal symmetry then
\begin{equation}\label{local:infinitesimal:symmetries}
    \rho^{i}_\alpha \sigma^{\alpha}\frac{\partial L}{\partial x^{i}} + \rho^{i}_\alpha y^{\alpha}\frac{\partial \sigma^{\beta}}{\partial x^{i}}\frac{\partial L}{\partial y^{\beta}} + C^\beta_{\ \alpha\gamma}y^\alpha\sigma^\gamma \frac{\partial L}{\partial y^{\beta}} = 0.
\end{equation}

Define the \emph{Herglotz momentum} associated with $\sigma$ by the local expression
\[
J_\sigma = \frac{\partial L}{\partial y^\alpha}(x^{i},y^{\alpha},z)\,\sigma^\alpha(x^{i}).
\]

\begin{proposition}[Noether--Herglotz on a Lie algebroid]
\label{prop:noether_herglotz}
Let $\sigma\in\Gamma(E)$ be an infinitesimal symmetry of $L$.
Let $(x^{i}(t),y^{\alpha}(t),z(t))$ be a solution of the Herglotz equations.
Then the associated momentum $J_\sigma$ satisfies
\begin{equation}\label{eq:J_balance}
\dot J_\sigma = \frac{\partial L}{\partial z}\,J_\sigma.
\end{equation}
Equivalently,
\[
\frac{d}{dt}\left(
e^{-\int_0^t \frac{\partial L}{\partial z}(\omega)\,d\omega}\,J_\sigma(t)
\right) = 0
\]
along the trajectories.
\end{proposition}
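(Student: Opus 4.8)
The plan is to mimic the proof of the energy balance law (Proposition~\ref{prop:energy_balance}): differentiate $J_\sigma$ along a solution, substitute the Euler--Lagrange--Herglotz equations \eqref{eq:ELH_equations} for the time derivative $\tfrac{d}{dt}(\partial L/\partial y^\alpha)$, and then recognize that the surviving terms reassemble precisely into the infinitesimal symmetry condition \eqref{local:infinitesimal:symmetries}, which vanishes by hypothesis. The integrating-factor reformulation then follows mechanically, exactly as in the energy case.

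Concretely, I would first compute
\[
\dot J_\sigma = \frac{d}{dt}\!\left(\frac{\partial L}{\partial y^\alpha}\right)\sigma^\alpha + \frac{\partial L}{\partial y^\alpha}\,\dot\sigma^\alpha,
\]
using that $\sigma$ depends only on the base point, so that $\dot\sigma^\alpha = \tfrac{\partial\sigma^\alpha}{\partial x^i}\dot x^i = \rho^i_{\ \beta}\,y^\beta\,\tfrac{\partial\sigma^\alpha}{\partial x^i}$ by the admissibility relation $\dot x^i = \rho^i_{\ \beta}y^\beta$. Next I would substitute $\tfrac{d}{dt}(\partial L/\partial y^\alpha) = -C^\gamma_{\ \alpha\beta}y^\beta\tfrac{\partial L}{\partial y^\gamma} + \rho^i_{\ \alpha}\tfrac{\partial L}{\partial x^i} + \tfrac{\partial L}{\partial z}\tfrac{\partial L}{\partial y^\alpha}$ from \eqref{eq:ELH_equations}. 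The term $\tfrac{\partial L}{\partial z}\tfrac{\partial L}{\partial y^\alpha}\sigma^\alpha = \tfrac{\partial L}{\partial z}J_\sigma$ splits off immediately, producing the desired right-hand side, and leaves three residual terms.

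The crux is to show that these three residual terms,
\[
-C^\gamma_{\ \alpha\beta}\,y^\beta\frac{\partial L}{\partial y^\gamma}\sigma^\alpha + \rho^i_{\ \alpha}\frac{\partial L}{\partial x^i}\sigma^\alpha + \rho^i_{\ \beta}\,y^\beta\frac{\partial\sigma^\alpha}{\partial x^i}\frac{\partial L}{\partial y^\alpha},
\]
coincide with the left-hand side of \eqref{local:infinitesimal:symmetries}. The second term matches the first symmetry term verbatim, and the third matches the second symmetry term after relabelling dummy indices. The bracket term requires the skew-symmetry $C^\gamma_{\ \alpha\beta}=-C^\gamma_{\ \beta\alpha}$: swapping the two lower indices converts $-C^\gamma_{\ \alpha\beta}y^\beta\sigma^\alpha$ into $C^\beta_{\ \alpha\gamma}y^\alpha\sigma^\gamma$ (after renaming), which is exactly the Lie-bracket term $C^\beta_{\ \alpha\gamma}y^\alpha\sigma^\gamma\tfrac{\partial L}{\partial y^\beta}$ appearing in the symmetry condition. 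Hence the three residual terms sum to zero, yielding $\dot J_\sigma = \tfrac{\partial L}{\partial z}\,J_\sigma$.

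Finally, the equivalent conserved form is immediate: with $\lambda(t) = \exp\!\big(-\int_0^t \tfrac{\partial L}{\partial z}\,d\omega\big)$ one has $\dot\lambda = -\tfrac{\partial L}{\partial z}\lambda$, so $\tfrac{d}{dt}(\lambda J_\sigma) = \dot\lambda J_\sigma + \lambda\dot J_\sigma = 0$. I expect the only real obstacle to be bookkeeping: correctly matching the index structure of the Lie-bracket residual against the symmetry condition and applying the skew-symmetry with the right sign, since an error there would spoil the exact cancellation on which the whole argument rests.
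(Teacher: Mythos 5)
Your proof is correct and follows essentially the same route as the paper's: differentiate $J_\sigma$, use admissibility to write $\dot\sigma^\alpha=\rho^i_{\ \beta}y^\beta\,\partial\sigma^\alpha/\partial x^i$, substitute the Euler--Lagrange--Herglotz equations \eqref{eq:ELH_equations}, cancel the residual terms against the symmetry condition \eqref{local:infinitesimal:symmetries}, and conclude with the integrating factor. Your explicit index bookkeeping (the swap $C^\gamma_{\ \alpha\beta}=-C^\gamma_{\ \beta\alpha}$ followed by relabelling to match the bracket term of the symmetry condition) is exactly the step the paper leaves implicit, and it checks out as you anticipate.
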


Hence the rescaled momentum 
$e^{-\int_0^t (\partial L/\partial z)(\omega)\,d\omega}\,J_\sigma$
is a time-dependent quantity conserved in the classical sense.

\begin{proof}
By definition,
\[
J_\sigma = \frac{\partial L}{\partial y^\alpha}\,\sigma^\alpha(x).
\]
Differentiate with respect to $t$:
\[
\dot J_\sigma
= \frac{d}{dt}\left(\frac{\partial L}{\partial y^\alpha}\right)\sigma^\alpha
+ \frac{\partial L}{\partial y^\alpha}\frac{d}{dt}\big(\sigma^\alpha(x)\big).
\]
Using the chain rule,
\[
\frac{d}{dt}(\sigma^\alpha(x)) = 
\frac{\partial \sigma^\alpha}{\partial x^i} \dot x^i
= \frac{\partial \sigma^\alpha}{\partial x^i}\,\rho^i_{\ \beta}y^\beta.
\]
and substituting $\dot x^i=\rho^i_{\ \beta}y^\beta$. Inserting the Euler--Lagrange--Herglotz equations, we obtain
\begin{align*}
\dot J_\sigma
&= \left(
\rho^i_{\ \alpha}\frac{\partial L}{\partial x^i}
- C^\gamma_{\ \alpha\beta}y^\beta \frac{\partial L}{\partial y^\gamma}
+ \frac{\partial L}{\partial z}\frac{\partial L}{\partial y^\alpha}
\right)\sigma^\alpha
+ \frac{\partial L}{\partial y^\alpha}\frac{\partial \sigma^\alpha}{\partial x^i}
\rho^i_{\ \beta}y^\beta.
\end{align*}

Finally, the local expression \eqref{local:infinitesimal:symmetries} satisfied by $\sigma$ due to the fact that it is an infinitesimal symmetry, implies that almost all the terms are zero except for the term containing the partial derivative with respect to the variable $z$ which results in
\[
\dot J_\sigma = \frac{\partial L}{\partial z}\,J_\sigma.
\]
The integrating factor argument then shows that
$e^{-\int_0^t (\partial L/\partial z)(\omega)\,d\omega} J_\sigma(t)$
is constant.
\end{proof}

\begin{remark}
In the conservative case $\partial L/\partial z=0$, we recover the usual
Noether theorem: $J_\sigma$ is conserved.\hfill$\diamond$ 
\end{remark}

\begin{example}[Rigid body with dissipated body angular momentum]
We next consider a rigid body with configuration Lie group $G = \mathrm{SO}(3)$
and Lie algebra $\mathfrak{g} \simeq \mathbb{R}^3$.
We work on the Lie algebra viewed as a Lie algebroid over a point, $E=\mathfrak{g}$.
Let $\mathbb{I} \colon \mathfrak{g} \to \mathfrak{g}^*$ be the inertia operator.
We choose body angular velocity $\xi \in \mathfrak{g}$ and define for $\gamma>0$ the
Herglotz-type Lagrangian (see \cite{anahory2024reduction})
\[
\ell(\xi,z)
= \frac{1}{2}\,\langle \mathbb{I}\xi,\xi\rangle - \gamma\,z.
\]

The Euler--Poincaré--Herglotz equations on $\mathfrak{g}$ read
\[
\frac{d}{dt}\left(\frac{\partial \ell}{\partial \xi}\right)
+ \operatorname{ad}^*_{\xi}\!\left(\frac{\partial \ell}{\partial \xi}\right)
= \frac{\partial \ell}{\partial z}\,\frac{\partial \ell}{\partial \xi},
\qquad \dot z = \ell(\xi,z).
\]
Identifying the body angular momentum $\mu := \partial \ell/\partial \xi
= \mathbb{I}\xi$, we obtain
\[
\dot\mu + \operatorname{ad}^*_{\xi}\mu = -\gamma\,\mu.
\]

Suppose now that the inertia tensor is axially symmetric about the body $e_3$-axis,
so that the Lagrangian is invariant under rotations generated by $e_3$.
In the conservative case ($\gamma=0$), Noether's theorem implies that the
component $\langle \mu, e_3\rangle$ of the body angular momentum is conserved (see \cite{marsden2013introduction} for instance).

In the Herglotz setting, Noether--Herglotz Proposition~\ref{prop:noether_herglotz}
gives instead
\[
\frac{d}{dt} J_\sigma = \frac{\partial \ell}{\partial z}\,J_\sigma
= -\gamma\,J_\sigma,
\]
where $J_\sigma(t) = \langle \mu(t),e_3\rangle$ is the momentum associated with
the symmetry $\sigma=e_3$.
Thus
\[
J_\sigma(t) = J_\sigma(0)\,e^{-\gamma t}
\quad\text{and}\quad
\widetilde J_\sigma(t) := e^{\gamma t}\,J_\sigma(t)
\ \text{is conserved}.
\]

From the viewpoint of \cite{de2021symmetries}, $\widetilde J_\sigma$ is a
\emph{dissipated quantity}: the usual body component of angular momentum
decays exponentially, but after multiplication by the integrating factor
determined by the Herglotz term it becomes an invariant of the motion.
\end{example}

\begin{example}[Dissipated energy and momenta in Wong-type systems]
We finally revisit the dissipative Wong equations of
Example~\ref{wong} from the viewpoint of
energy balance and dissipated quantities for particles with  symmetries.

Let $L_{\mathrm{red}} \colon \widehat{TQ}\times\mathbb{R}\to\mathbb{R}$
be the reduced Herglotz Lagrangian on the Atiyah algebroid
$\widehat{TQ}=TQ/G \to M$, of the form
\[
L_{\mathrm{red}}(q^i,\dot q^i,v^A,z)
= \frac{1}{2}\big(
\kappa_{AB}\,v^A v^B + g_{ij}\,\dot q^i \dot q^j\big) - \gamma\,z,
\qquad \gamma>0,
\]
where $g_{ij}$ and $\kappa_{AB}$ encode the metrics on the base $M$ and the
Lie algebra $\mathfrak{g}$, respectively.

The corresponding generalized energy is
\[
E(q,\dot q,v,z)
= \left\langle \frac{\partial L_{\mathrm{red}}}{\partial \dot q},\,\dot q\right\rangle
+ \left\langle \frac{\partial L_{\mathrm{red}}}{\partial v},\,v\right\rangle
- L_{\mathrm{red}}(q,\dot q,v,z)
= \frac{1}{2}\big(
\kappa_{AB}\,v^A v^B + g_{ij}\,\dot q^i \dot q^j\big) + \gamma z,
\]
that is, the total kinetic energy. Since $\partial L_{\mathrm{red}}/\partial z = -\gamma$,
Proposition~\ref{prop:energy_balance} implies
\[
\dot E = \frac{\partial L_{\mathrm{red}}}{\partial z}\,E = -\gamma\,E,
\]
and hence
\[
E(t) = E(0)\,e^{-\gamma t},
\qquad
\widetilde E(t) := e^{\gamma t}\,E(t)
\ \text{is conserved}.
\]
Thus the kinetic energy of the Wong system decays exponentially, while the
rescaled quantity $\widetilde E$ is a dissipated invariant.

A second family of dissipated quantities arises from the internal
symmetry. Consider the $G$-action on $Q$ and the induced $G$-action on
$\widehat{TQ}$, and let $\sigma\in\Gamma(E)$ be the section corresponding
to a fixed element $\eta\in\mathfrak{g}$. In the non-dissipative case,
Noether's theorem yields conservation of the associated momentum
$\langle \frac{\partial L_{\mathrm{red}}}{\partial v},\eta\rangle$ \cite{cendra2001lagrangian}.
In the Herglotz case, Proposition~\ref{prop:noether_herglotz} gives
\[
\frac{d}{dt} J_\eta
= \frac{\partial L_{\mathrm{red}}}{\partial z}\,J_\eta
= -\gamma\,J_\eta,
\qquad
J_\eta(t) := \left\langle \frac{\partial L_{\mathrm{red}}}{\partial v},
\eta\right\rangle = \kappa_{AB}\,v^A \eta^B,
\]
and hence
\[
J_\eta(t) = J_\eta(0)\,e^{-\gamma t},
\qquad
\widetilde J_\eta(t) := e^{\gamma t}\,J_\eta(t)
\ \text{is conserved}.
\]

Therefore, both the kinetic energy and the momenta of the Wong system
become \emph{dissipated quantities}: they decay exponentially along solutions,
while their appropriately rescaled versions are genuine invariants of the
dissipative dynamics on the Atiyah algebroid.
\end{example}

\begin{example}[Thermoviscous rigid body as an Euler--Poincar\'e--Herglotz system]
\label{ex:EPH-thermoviscous-rigid-body}

We consider a simple finite-dimensional thermomechanical system:
a rigid body with configuration space $Q = SO(3)$, whose dynamics is affected by a linear viscous torque and coupled to an
entropy variable $S(t)$. The Herglotz variable $z$ will be interpreted
as the entropy $S$.

The Euler–Poincaré–Herglotz formulation of a thermoviscous rigid body provides a symmetry–reduced counterpart of the finite-dimensional thermodynamic systems with linear dissipation studied in Gay-Balmaz and Yoshimura \cite{GayBalmazYoshimura2017PartI}, and it parallels the rigid-body thermomechanical models on Lie groups of Couéraud and Gay-Balmaz \cite{CoueraudGayBalmaz2020}, where viscous torques and entropy production give rise to dissipative Euler-Poincaré-Herglotz dynamics.

Let $G=SO(3)$ act on $Q=SO(3)$ by left multiplication.
The associated action Lie algebroid is $E = Q\times\mathfrak{so}(3) \;\to\; Q,
\,\, \tau(R,\xi) = R$. Its structure is:
\begin{itemize}
  \item The anchor map $\rho \colon Q\times\mathfrak{so}(3) \to TQ,\,
  \rho(R,\xi) = \xi_Q(R)$, where $\xi_Q$ is the infinitesimal generator of the left action:
  $\xi_Q(R) = \frac{d}{dt}\big|_{t=0} \big( \exp(t\xi)\,R\big)$.
  \item The bracket on sections is induced from the Lie bracket on
  $\mathfrak{so}(3)$:
  \[
  [\sigma_1,\sigma_2](R) = \big(R,[\xi_1,\xi_2]\big),
  \quad \text{if }\sigma_i(R)=(R,\xi_i),\ i=1,2.
  \]
\end{itemize}

Choose a basis $\{e_\alpha\}_{\alpha=1}^3$ of $\mathfrak{so}(3)$ with
structure constants $[e_\alpha,e_\beta] = C_{\alpha\beta}^{\ \ \gamma} e_\gamma$, and write any $\xi\in\mathfrak{so}(3)$ as $\xi=\xi^\alpha e_\alpha$.
A curve in the algebroid is then $(R(t),\xi(t))$, with coordinates
$\xi^\alpha(t)$. The admissibility condition $\dot R = \rho(R,\xi)$ reads $\dot R(t) = \xi_Q\big(R(t)\big)
           = R(t)\,\widehat{\xi(t)}$, where $\widehat{\xi}\in \mathfrak{so}(3)$ is the skew-symmetric matrix
corresponding to $\xi\in\mathbb{R}^3$. Thus admissible curves encode the
usual rigid body kinematics in body coordinates.

Let $I\colon \mathfrak{so}(3)\to\mathfrak{so}(3)^*$ be a positive definite
inertia operator, and let $U\colon Q\to\mathbb{R}$ be a potential energy
(e.g.\ gravitational). We interpret the scalar variable $z(t)$ as the
entropy $S(t)$ of the body. Fix a constant temperature parameter $T_0>0$,
and a viscous coefficient $\gamma>0$.

We define the Herglotz-type Lagrangian
\[
L\colon E\times\mathbb{R}\to\mathbb{R},\qquad
L(R,\xi,S)
= \frac{1}{2}\,\langle I\xi,\xi\rangle - U(R)
  - T_0\,S \;+\; \frac{\gamma}{2T_0}\,\lVert\xi\rVert^2.
\]
Note that the term $\frac{1}{2}\langle I\xi,\xi\rangle - U(R)$ is the usual mechanical
  Lagrangian of the rigid body, $-T_0 S$ encodes the coupling to entropy at temperature $T_0$
  (Legendre-type term $-TS$), and the term $\frac{\gamma}{2T_0}\|\xi\|^2$ is a simple quadratic
  dissipation potential, scaled by $1/T_0$ so that the corresponding
  entropy production will be proportional to the dissipation rate.

The entropy satisfies $\dot S(t) = L\big(R(t),\xi(t),S(t)\big)$. For a Herglotz Lagrangian
\[
\ell(\xi,S) = L(R,\xi,S)
\quad\text{(here $\ell$ does not depend explicitly on $R$ if $U$ is left-invariant),}
\]
the Euler--Poincar\'e--Herglotz equations on the Lie algebra
$\mathfrak{so}(3)$ read (cf.\ Example~\ref{ex:EPH_action_intrinsic}
with trivial base dependence)
\begin{equation}\label{eq:EPH-rigid-body}
\frac{d}{dt}\left(\frac{\partial \ell}{\partial \xi}\right)
+ \operatorname{ad}^*_{\xi}\!\left(\frac{\partial \ell}{\partial \xi}\right)
= \frac{\partial \ell}{\partial S}\,
  \frac{\partial \ell}{\partial \xi},
\qquad
\dot S = \ell(\xi,S),
\end{equation}
together with the kinematic relation
\[
\dot R = R\widehat{\xi}.
\]

In our case,
\[
\frac{\partial \ell}{\partial \xi}
= I\xi + \frac{\gamma}{T_0}\,\xi
=: \Pi(\xi),
\qquad
\frac{\partial \ell}{\partial S} = -T_0.
\]
Thus \eqref{eq:EPH-rigid-body} becomes
\begin{equation}\label{eq:EPH-rigid-body-explicit}
\frac{d}{dt}\Pi(\xi)
+ \operatorname{ad}^*_{\xi}\Pi(\xi)
= -\,T_0\,\Pi(\xi),
\qquad
\dot S
= \frac{1}{2}\langle I\xi,\xi\rangle - U(R)
  - T_0\,S + \frac{\gamma}{2T_0}\lVert\xi\rVert^2.
\end{equation}
The first equation is an \emph{Euler--Poincar\'e equation with linear
damping} on the momentum $\Pi(\xi)$, while the second encodes the entropy
production.

The Herglotz energy
\[
E(R,\xi,S) =
\left\langle \frac{\partial L}{\partial \xi},\,\xi\right\rangle - L
= \langle I\xi,\xi\rangle + \frac{\gamma}{T_0}\lVert\xi\rVert^2
  - \bigg(\frac{1}{2}\langle I\xi,\xi\rangle - U(R)
         -T_0 S + \frac{\gamma}{2T_0}\lVert\xi\rVert^2\bigg)
\]
simplifies to
\[
E(R,\xi,S)
= \frac{1}{2}\langle I\xi,\xi\rangle
  + \frac{\gamma}{2T_0}\lVert\xi\rVert^2
  + U(R) + T_0 S.
\]
By the general energy balance law for Herglotz systems,
\[
\dot E = \frac{\partial L}{\partial S}\,E = -T_0\,E,
\]
so $E(t)$ decays exponentially:
\[
E(t) = E(0)\,e^{-T_0 t}.
\]

Energy decrease is not violating the first law of Thermodynamics since we are describing a non-isolated physical system loosing energy to its surrondings due to the dissipative term. At the same time, the entropy production law
\[
\dot S
= \frac{1}{2}\langle I\xi,\xi\rangle
  - U(R) - T_0 S
  + \frac{\gamma}{2T_0}\lVert\xi\rVert^2
\]
shows that, for suitable regimes (e.g.\ $U$ bounded below, large damping
or near equilibrium), the entropy $S$ increases, and its growth is
driven by the dissipative term $\frac{\gamma}{2T_0}\|\xi\|^2$.
In this sense, the Herglotz term realizes a finite-dimensional,
symmetry-reduced analogue of a thermodynamic entropy balance:
mechanical energy is dissipated and converted into entropy at a rate
encoded geometrically by the Herglotz Lagrangian.

\end{example}

\begin{remark}[Comparison with the heavy top in a Stokes flow]
The thermoviscous rigid body considered above is closely related to the
thermomechanical rigid-body models of Couéraud and Gay--Balmaz,
notably the heavy top in a Stokes flow \cite{CoueraudGayBalmaz2020}.
There, the Lagrangian consists of a mechanical part and an internal
energy $U_B(S)$ depending only on the entropy, while viscous effects
enter through phenomenological constraints that produce a torque of
Stokes type and an associated entropy-production law. The Lagrangian
itself encodes only the reversible part of the dynamics.

In contrast, the Herglotz formulation used here incorporates a simple
Rayleigh dissipation potential directly into the variational principle,
through the term $\tfrac{\gamma}{2T_0}\|\xi\|^2$ in $L(R,\xi,S)$.  The
internal contribution is still purely entropic (here $T_0S$, the
isothermal case of the general $U_B(S)$), but the velocity-dependent
dissipation is now encoded geometrically via the Herglotz term.  As a
result, the Euler--Poincaré--Herglotz equation contains a multiplicative
damping term $-T_0\Pi(\xi)$, and the associated Herglotz energy decays
exponentially rather than satisfying a conservation law for the sum of
mechanical and internal energies.

Thus the present model provides a fully variational, symmetry-reduced
counterpart of the thermomechanical heavy-top dynamics, with dissipation
implemented intrinsically through the Herglotz formalism rather than by
external constraints.\hfill$\diamond$
\end{remark}

\section{Coordinate-free formulation of the Euler--Lagrange--Herglotz equations}\label{sec:coordinate_free}


In order to obtain a coordinate-free formulation of the Herglotz equations,
we briefly recall Lie algebroid connections and the
Euler--Lagrange--Poincar\'e formalism developed in 
\cite{li2017lagrangian} and \cite{hu2024hamiltonian}.

Let $(E\to M,[\cdot,\cdot],\rho)$ be a Lie algebroid.
A \emph{$TM$-connection} on $E$ is a linear connection
\[
\nabla \colon \mathfrak{X}(M)\times\Gamma(E)\longrightarrow\Gamma(E),
\qquad (X,u)\mapsto\nabla_X u,
\]
on the vector bundle $E\to M$.  
It induces two $E$-connection on $E$, denoted again by $\nabla$ and $\overline{\nabla}$
\[
        \nabla_a u := \nabla_{\rho(a)}u, \qquad
        \overline{\nabla}_{a} u := \nabla_{\rho(u)}a + [a, u],
        \]
        for $a\in E$, $u\in\Gamma(E)$ . In addition, for each $E$-connection, a dual connection on the dual algebroid $\pi:E^{*}\to M$ denoted by $\nabla^*$ is induced satisfying
        \[
        \langle \nabla^*_a p, u \rangle := \rho(a)\left[\langle p,u \rangle \right]-\langle p,\nabla_{a}u \rangle \quad \forall u\in \Gamma(E).
        \]
        for $a\in E$ and $p\in\Gamma(E^*)$.

Given an admissible curve $a(t)\in E$ with base curve $x(t)\in M$,
we write $a(t)\in E_{x(t)}$ and denote by
\[
\nabla_{a(t)} \colon E_{x(t)}\longrightarrow E_{x(t)}
\]
the covariant derivative along $a(t)$ induced by the $E$-connection $\nabla$.
For a curve $u(t)\in E_{x(t)}$ over the same base curve $x(t)$ whose values coincide with those of a time-dependent section $\eta(t,x(t))$ of $\Gamma(E)$, i.e., $u(t)=\eta(t,x(t))$, we set
\[
\nabla_{a(t)}u(t):=\nabla_{a(t)} \eta (t,x(t)) + \dot{\eta}(t,x(t))\in E_{x(t)},
\]
where the covariant derivative is taken with respect to the $E$–connection.




Using a $TM$-connection on $E$ each vector $X\in T_{a}E$ can be decomposed in a pair $(X_{\mathrm{hor}},X_{\mathrm{ver}})$ with $X_{\mathrm{hor}}\in TM$ and $X_{\mathrm{ver}}\in E$, called the horizontal and vertical components of $X$. The vertical component belongs to the tangent to the fibers $X_{\mathrm{ver}}\in T_{a}E_{x}\approx E_{x}$, while the horizontal component is in one-to-one correspondence with $TM$ under the connection $\nabla$ (see Section 3.2 in \cite{LOJAFERNANDES2002119}).  Similarly, a $TM$-connection determines a splitting of $T^{*}E$ into horizontal and vertical subbundles. Any $\theta \in T^{*}_{a}E$ decomposes, relative to the $TM$–connection $\nabla$, in the pair $(\theta_{\mathrm{hor}},\theta_{\mathrm{ver}})$, with $\theta_{\mathrm{ver}}(a)\in E^*_{\tau(a)}$ and $\theta_{\mathrm{hor}}(a)\in T^*_{\tau(a)}M$ defined by
$$\langle\theta_{\mathrm{hor}},X \rangle = \langle\theta_{\mathrm{hor}},X_{\mathrm{hor}} \rangle, \qquad \langle\theta_{\mathrm{ver}},X \rangle = \langle\theta_{\mathrm{ver}},X_{\mathrm{ver}} \rangle,$$
for all $X\in T_{a}E$.

Following \cite{li2017lagrangian, hu2024hamiltonian}, given an admissible path $a(t)\in E$ over $x(t)\in M$ from $[0,1]$, an admissible variation is of the form $X_{b,a}(t)\in T_{a(t)}E$, where $b$ is a curve in $E$ over $x(t)$ but not necessarily admissible such that $b(0)=b(1)=0.$ Relative to a $TM$-connection, the vertical component of $X_{b,a}$ is $\overline{\nabla}_{a} b$ and the horizontal component is $\rho(b)$. In \cite{CrainicFernandes2003Integrability} it is shown that the space of admissible variations is independent of the choice of connection in the above definition.

When $L\colon E\to\mathbb{R}$ is an ordinary (non-Herglotz) Lagrangian, 
the Euler--Lagrange--Herglotz equations can be written in the 
connection-based form
\begin{equation}\label{eq:ELP_classical}
\overline{\nabla}^*_{a(t)}p(t)
-
\rho^*\!\big(dL_{\mathrm{hor}}(a(t))\big)
=0,
\qquad
p=dL_{\mathrm{ver}}(a),
\end{equation}
as shown in \cite[Section~3]{li2017lagrangian} and 
\cite[Theorem~5.3]{hu2024hamiltonian}. In \cite{li2017lagrangian} the notation $\overline{\nabla}^*$ does not denote the dual connection but the adoint operator, causing a difference in the sign that we see in the equations.

\subsection{Euler--Lagrange--Herglotz equations}

We now derive an intrinsic, connection-based formulation of the
Euler--Lagrange--Herglotz equations on a Lie algebroid,
in the spirit of \cite{li2017lagrangian}.%

\begin{theorem}[Euler--Lagrange--Herglotz equations]
\label{thm:ELP_Herglotz}
Let $(E\to M,[\cdot,\cdot],\rho)$ be a Lie algebroid equipped with a 
$TM$--connection $\nabla$, and let $\nabla^*$ be the induced $E$--connection
on $E^*$.  
Let $L\colon E\times\mathbb{R}\to\mathbb{R}$ be a Herglotz-type Lagrangian and let
$(x(t),a(t),z(t))$ be an admissible curve satisfying
\[
\dot x(t)=\rho(a(t)),
\qquad
\dot z(t)=L\big(a(t),z(t)\big).
\]
Decompose the differential of $L$ as
\[
dL(a,z)=dL_{\mathrm{hor}}(a,z)+dL_{\mathrm{ver}}(a,z)
+\frac{\partial L}{\partial z}(a,z)\,dz,
\]
and define the momentum
\[
p(t):=dL_{\mathrm{ver}}\big(a(t),z(t)\big)\in E^*_{x(t)}.
\]

Then an admissible curve $(x(t),a(t),z(t))$ satisfies the Euler--Lagrange--Herglotz
equations if and only if it satisfies the intrinsic relation
\begin{equation}\label{eq:ELP_Herglotz_intrinsic_short}
\overline{\nabla}^*_{a(t)}p(t)
-
\rho^*\!\big(dL_{\mathrm{hor}}(a(t),z(t))\big)
=\frac{\partial L}{\partial z}\big(a(t),z(t)\big)\,p(t),
\end{equation}
together with the Herglotz equation $\dot z=L(a,z)$.
\end{theorem}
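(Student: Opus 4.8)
The plan is to prove the equivalence by computing both sides of \eqref{eq:ELP_Herglotz_intrinsic_short} in the fibred coordinates $(x^i,y^\alpha,z)$ and matching the result, component by component, against the local Euler--Lagrange--Herglotz equations \eqref{eq:ELH_equations} of Theorem~\ref{thm:ELH_algebroid}; the stated ``if and only if'' with the stationarity condition then follows at once from that theorem. The crucial structural observation is that the $TM$-connection $\nabla$, the anchor $\rho$, and the structure functions $C^\gamma_{\ \alpha\beta}$ are all independent of $z$, so the Herglotz variable enters only as a passive parameter in the differential of $L$ and dynamically through the curve $z(t)$. Consequently, for each fixed $z$ the decomposition $dL=dL_{\mathrm{hor}}+dL_{\mathrm{ver}}+\frac{\partial L}{\partial z}\,dz$ is exactly the horizontal/vertical splitting of the ordinary Lagrangian $L(\cdot,\cdot,z)$ on $E$, and the classical computation underlying \eqref{eq:ELP_classical} can be transported almost verbatim.

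First I would record the vertical and horizontal pieces in coordinates. Writing $\nabla_{\partial/\partial x^i}e_\alpha=\Gamma^\gamma_{\ i\alpha}\,e_\gamma$, the vertical differential yields the momentum components $p_\alpha=\partial L/\partial y^\alpha$, so $p=dL_{\mathrm{ver}}(a,z)$ has local expression $p=(\partial L/\partial y^\alpha)\,e^\alpha$. Pairing $dL$ with the horizontal lifts $H_i=\partial/\partial x^i-\Gamma^\gamma_{\ i\beta}y^\beta\,\partial/\partial y^\gamma$ gives $dL_{\mathrm{hor}}=\big(\partial L/\partial x^i-\Gamma^\gamma_{\ i\beta}y^\beta\,\partial L/\partial y^\gamma\big)\,dx^i\in T^*_xM$, and applying the dual anchor $\rho^*(dx^i)=\rho^i_{\ \alpha}\,e^\alpha$ produces the $\alpha$-component $\rho^i_{\ \alpha}\big(\partial L/\partial x^i-\Gamma^\gamma_{\ i\beta}y^\beta\,\partial L/\partial y^\gamma\big)$.

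Next I would expand $\overline{\nabla}^*_{a(t)}p(t)$ using the defining relation for the dual connection together with the convention that $\overline{\nabla}_{a(t)}$ along an admissible curve carries the total time derivative. Using $\overline{\nabla}_a u=\nabla_{\rho(u)}a+[a,u]$ and $\dot x^i=\rho^i_{\ \beta}y^\beta$, the $\alpha$-component of $\overline{\nabla}^*_{a(t)}p(t)$ collects three contributions: the time derivative $\frac{d}{dt}(\partial L/\partial y^\alpha)$, the bracket term $C^\gamma_{\ \alpha\beta}y^\beta\,\partial L/\partial y^\gamma$, and connection-dependent terms carrying the coefficients $\Gamma^\gamma_{\ i\beta}$. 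The key step is that these last terms cancel exactly against the connection-dependent part $-\rho^i_{\ \alpha}\Gamma^\gamma_{\ i\beta}y^\beta\,\partial L/\partial y^\gamma$ of $\rho^*(dL_{\mathrm{hor}})$ obtained above; this cancellation is precisely the content of the classical identity \eqref{eq:ELP_classical} of \cite{li2017lagrangian,hu2024hamiltonian}, and is what renders the left-hand side connection-independent.

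After this cancellation the $\alpha$-component of $\overline{\nabla}^*_{a(t)}p(t)-\rho^*(dL_{\mathrm{hor}})$ reduces to $\frac{d}{dt}(\partial L/\partial y^\alpha)+C^\gamma_{\ \alpha\beta}y^\beta\,\partial L/\partial y^\gamma-\rho^i_{\ \alpha}\,\partial L/\partial x^i$, while the right-hand side $\frac{\partial L}{\partial z}\,p$ contributes $\frac{\partial L}{\partial z}\,\partial L/\partial y^\alpha$. Equating the two reproduces \eqref{eq:ELH_equations} exactly, with the Herglotz equation $\dot z=L(a,z)$ carried along unchanged, and Theorem~\ref{thm:ELH_algebroid} then delivers the equivalence with the Herglotz stationarity principle. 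The main obstacle I anticipate is bookkeeping the connection-coefficient cancellation without reproducing the full Li--Stern--Tang derivation: one must check that the total time derivative inside $\overline{\nabla}^*_{a(t)}p(t)$ correctly absorbs the $z(t)$-dependence of $p$ through the chain rule, so that the contribution $\frac{\partial^2 L}{\partial z\,\partial y^\alpha}\dot z$ is neither lost nor double-counted. Since that term is simply part of $\frac{d}{dt}(\partial L/\partial y^\alpha)$, it introduces no genuine difficulty, but it is the one place where the Herglotz bookkeeping departs from the classical case and therefore deserves explicit care.
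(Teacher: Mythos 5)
Your proof is correct, but it is not the route the paper's own proof of Theorem~\ref{thm:ELP_Herglotz} takes. The paper argues variationally and intrinsically: it differentiates the Herglotz constraint along an admissible variation $(X_{b,a},\delta z)$, solves the resulting linear ODE for $\delta z$ with the integrating factor $\lambda(t)=\exp\!\left(-\int_0^t \frac{\partial L}{\partial z}\,d\omega\right)$, integrates by parts against $\overline{\nabla}_a b$ using the dual connection, and applies the fundamental lemma of the calculus of variations --- so the intrinsic relation \eqref{eq:ELP_Herglotz_intrinsic_short} is derived directly from the stationarity principle, with the coordinate expansion only mentioned as an alternative and deferred to Lemma~\ref{lem:local_ELP_Herglotz}. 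Your proposal is essentially that alternative, carried out in full: you expand both sides of \eqref{eq:ELP_Herglotz_intrinsic_short} in a local frame and match against \eqref{eq:ELH_equations}, then invoke Theorem~\ref{thm:ELH_algebroid} for the variational characterization. Your coordinate formulas check out: $p_\alpha=\partial L/\partial y^\alpha$ (the vertical splitting is connection-independent), $dL_{\mathrm{hor}}=\bigl(\partial L/\partial x^i-\Gamma^\gamma_{\ i\beta}y^\beta\,\partial L/\partial y^\gamma\bigr)dx^i$, and the $\alpha$-component of $\overline{\nabla}^*_a p$ is $\dot p_\alpha+C^\gamma_{\ \alpha\beta}y^\beta p_\gamma-\rho^i_{\ \alpha}\Gamma^\gamma_{\ i\beta}y^\beta p_\gamma$, whose Christoffel term cancels exactly against the one in $\rho^*(dL_{\mathrm{hor}})$ since $p_\gamma=\partial L/\partial y^\gamma$. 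You also correctly identify the one Herglotz-specific subtlety, namely that $\frac{\partial^2 L}{\partial z\,\partial y^\alpha}\dot z$ is absorbed into $\frac{d}{dt}(\partial L/\partial y^\alpha)$ by the chain rule. Note one point where you actually improve on the paper's appendix: Lemma~\ref{lem:local_ELP_Herglotz} chooses the locally trivial connection ($\Gamma\equiv 0$) and merely asserts that for a general connection ``the terms containing Christoffel symbols would cancel out''; your general-$\Gamma$ computation verifies this claim explicitly, which simultaneously proves the connection-independence remark following the theorem. What the two approaches buy: the paper's variational argument is coordinate-free from the start and its machinery is reused verbatim for the Hamilton--Pontryagin--Herglotz Theorem~\ref{thm:implicit_ELP_Herglotz}, whereas your route is more elementary, leans on the already-established Theorem~\ref{thm:ELH_algebroid}, and addresses the theorem's literal statement (equivalence of the local equations with the intrinsic relation) head-on, in the spirit of the classical identity \eqref{eq:ELP_classical}.
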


\begin{proof}


Following the argument in
\cite{li2017lagrangian,hu2024hamiltonian} adapted to the contact case, take a variation $(X_{b,a}, \delta z)$ of a curve $(a(t), z(t))$ satisfying the Herglotz equation $\dot{z}=L(a(t),z(t))$. Then,
$$\frac{d}{dt}(\delta z)= \langle d L, (X_{b,a}, \delta z) \rangle = \langle dL_{\mathrm{hor}}(a), \rho(b) \rangle + \langle dL_{\mathrm{ver}}(a), \overline{\nabla}_{a} b \rangle + \frac{\partial L}{\partial z}\delta z.$$
Solving for $\delta z(t)$ using the integrating factor
\[
\lambda(t) := \exp\!\left(-\int_0^t \frac{\partial L}{\partial z}(\omega)\,d\omega\right),
\]
and also that $\delta z(0)=0$ and $\lambda(t)\neq 0$, the stationarity condition $\delta z(T)=0$ is equivalent to

\begin{equation*}
\int_0^T \lambda(t)
\left(
\langle dL_{\mathrm{hor}}(a), \rho(b) \rangle + \langle dL_{\mathrm{ver}}(a), \overline{\nabla}_{a} b \rangle
\right) \,dt = 0.
\end{equation*}
Equivalently, after integrating by parts the therm with $\overline{\nabla}_{a} b$ and using the fact that $b$ vanishes in the boundaries, we obtain
\[
\int_0^T \left(
\langle \lambda(t)\rho^{*}dL_{\mathrm{hor}}(a), b \rangle - \langle \overline{\nabla}_{a}^{*}\left(\lambda(t)dL_{\mathrm{ver}}(a)\right),  b \rangle
\right) \,dt = 0.
\]
The second term in the integrand satisfies
$$\overline{\nabla}_{a}^{*}\left(\lambda(t)dL_{\mathrm{ver}}(a)\right) = \lambda(t)\overline{\nabla}_{a}^{*}\left(dL_{\mathrm{ver}}(a)\right)  +\dot{\lambda}(t) dL_{\mathrm{ver}}(a).$$
Hence, we obtain
\[
\int_0^T \left(
\lambda(t)\langle \rho^{*}dL_{\mathrm{hor}}(a) -\overline{\nabla}_{a}^{*}\left(dL_{\mathrm{ver}}(a)\right) + \frac{\partial L}{\partial z} dL_{\mathrm{ver}}(a),  b \rangle
\right) \,dt = 0.
\]
for arbitrary variations $b$. Thus, from the fundamental theorem of calculus of variations we must have
$$\rho^{*}dL_{\mathrm{hor}}(a) - \overline{\nabla}_{a}^{*}\left(dL_{\mathrm{ver}}(a)\right) + \frac{\partial L}{\partial z} dL_{\mathrm{ver}}(a) = 0.$$
The converse follows by reversing the above computations, showing that then
the integral vanishes for all admissible variations.
Alternatively, expanding \eqref{eq:ELP_Herglotz_intrinsic_short} in a local
frame (see Lemma~\ref{lem:local_ELP_Herglotz} in the Appendix) yields the
local Euler--Lagrange--Herglotz equations.  
\end{proof}

\begin{remark}
    An important remark is that the local expression of the equations of motion is independent of the choice of $TM$-connection in the previous theorem, as it is pointed out in \cite{hu2024hamiltonian} (see also Lemma~\ref{lem:local_ELP_Herglotz} in the Appendix).
\end{remark}

\section{Hamilton--Pontryagin--Herglotz principle on a Lie algebroid}
\label{sec:HPH}

In the Hamilton--Pontryagin formulation of Lagrangian mechanics on a Lie
algebroid \cite{li2017lagrangian}, one considers triples of paths
$(a,v,p)$ in $E$ and $E^*$ together with a base path $x(t)$, and imposes
the kinematic constraint $a=v$ through a Lagrange multiplier $p$.
In the Herglotz setting, the action is replaced by a scalar variable
$z(t)$ satisfying a contact-type evolution equation.

Following \cite[Def.~5.3]{li2017lagrangian}, an $(E,E^*)$--path consists of:
\begin{enumerate}
\item[(i)] an $E$–path $a$ over a base curve $x\colon I\to M$;
\item[(ii)] an arbitrary curve $v$ on $E$ over the same base curve $x$ but not necessarily an $E$-path;
\item[(iii)] a curve $p$ on $E^*$ over $x$.
\end{enumerate}

Admissible variations of $(a,v,p, z)$ are the same as in
\cite{li2017lagrangian}: $\delta a=X_{b,a}$, $\delta z$, $\delta v$ and $\delta p$ are any variations satisfying $\tau_{*}(\delta v)=\pi_{*}(\delta p)=\rho(b)$ 



Let $L\colon E\times\mathbb{R}\to \mathbb{R}$ be a Herglotz Lagrangian.
Instead of an action functional, we define the
\emph{Pontryagin--Herglotz action variable} $z(t)$ by
\begin{equation}\label{eq:HPH_z_revised_final}
\dot z(t)
=
L\big(v(t),z(t)\big)
+
\langle p(t),\,a(t)-v(t)\rangle.
\end{equation}

Given a 1–parameter family $(a_s,v_s,p_s,z_s)$ of variations, the
\emph{Hamilton--Pontryagin--Herglotz principle} is
\[
\left.\frac{d}{ds}\right|_{s=0} z_s(T)=0,
\]
where $z_s$ solves the perturbed equation obtained from
\eqref{eq:HPH_z_revised_final}.

\begin{definition}
A curve $(a,v,p,z)$ satisfies the
\emph{Hamilton--Pontryagin--Herglotz principle}
if for every admissible variation
$(\delta a,\delta v,\delta p,\delta z)$ with
$\delta x(0)=\delta x(T)=0$ and $\delta z(0)=0$ one has
$\displaystyle \left.\frac{d}{ds}\right|_{s=0} z_s(T)=0$.
\end{definition}

\begin{theorem}[Implicit Euler--Lagrange--Poincar\'e--Herglotz equations]
\label{thm:implicit_ELP_Herglotz}
If $(a,v,p,z)$ satisfies the Hamilton--Pontryagin--Herglotz principle,
then:
\begin{equation}\label{eq:implicit_HPH_final_corrected}
\begin{aligned}
&\text{\em (i)}\quad a(t)\ \text{is an $E$–path: } \dot x=\rho(a),\\[0.2em]
&\text{\em (ii)}\quad a(t)=v(t),\\[0.2em]
&\text{\em (iii)}\quad p(t)=dL_{\mathrm{ver}}\big(a(t),z(t)\big),\\[0.2em]
&\text{\em (iv)}\quad
\overline{\nabla}_{a}^{*}p (t) - \rho^{*}dL_{\mathrm{hor}}(a(t),z(t)) - \frac{\partial L}{\partial z}(a(t),z(t)) p(t) = 0,\\[0.2em]
&\text{\em (v)}\quad \dot z(t)=L(a(t),z(t)).
\end{aligned}
\end{equation}
Conversely, any curve satisfying (i)--(v) satisfies the
Hamilton--Pontryagin--Herglotz principle and projects to a Herglotz
extremal.
\end{theorem}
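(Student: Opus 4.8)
The plan is to run the same variational computation used for Theorem~\ref{thm:ELP_Herglotz}, but now carrying the three independent families of free variations $\delta p$, $\delta v$ and $b$ (the latter generating $\delta a = X_{b,a}$) simultaneously. First I would differentiate the defining relation~\eqref{eq:HPH_z_revised_final} with respect to the variation parameter $s$ at $s=0$. Writing $\delta x = \rho(b)$ for the common base variation, decomposing $\delta v\in T_vE$ into its horizontal part $(\delta v)_{\mathrm{hor}}=\rho(b)$ and its free vertical part $(\delta v)_{\mathrm{ver}}=:w\in E_x$, and using that the vertical part of $\delta a=X_{b,a}$ is $\overline{\nabla}_a b$, this produces a linear ODE for $\delta z$ of the form
\[
\frac{d}{dt}(\delta z)-\frac{\partial L}{\partial z}\,\delta z
=\langle dL_{\mathrm{hor}}(v,z),\rho(b)\rangle
+\langle dL_{\mathrm{ver}}(v,z),w\rangle
+\langle \delta p,\,a-v\rangle
+\langle p,\,\overline{\nabla}_a b-w\rangle .
\]

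Next I would introduce the integrating factor $\lambda(t)=\exp\!\big(-\int_0^t(\partial L/\partial z)\,d\omega\big)$, exactly as in the proofs of Theorems~\ref{thm:ELH_algebroid} and~\ref{thm:ELP_Herglotz}, multiply through, integrate on $[0,T]$, and use $\delta z(0)=0$ together with $\lambda(T)\neq 0$ to turn the stationarity condition $\delta z(T)=0$ into the single integral identity
\[
\int_0^T\lambda(t)\Big[
\langle dL_{\mathrm{hor}}(v,z),\rho(b)\rangle
+\langle dL_{\mathrm{ver}}(v,z)-p,\,w\rangle
+\langle \delta p,\,a-v\rangle
+\langle p,\,\overline{\nabla}_a b\rangle
\Big]\,dt=0 .
\]
Since the free vertical part of $\delta p$, the vector $w$, and $b$ vary independently, I would extract the conditions one family at a time: taking $w=b=0$ with $\delta p$ arbitrary forces $a=v$, which is~(ii); recalling that $a$ is an $E$-path by the very definition of an $(E,E^*)$-path gives~(i), and substituting $a=v$ back into~\eqref{eq:HPH_z_revised_final} collapses it to $\dot z=L(a,z)$, which is~(v). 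Taking $\delta p=b=0$ with $w$ arbitrary forces $p=dL_{\mathrm{ver}}(v,z)=dL_{\mathrm{ver}}(a,z)$, which is~(iii).

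Finally, with (ii)--(iii) in hand, setting $\delta p=w=0$ leaves only the $b$-terms, and here I would integrate by parts the term $\langle p,\overline{\nabla}_a b\rangle$ using the Leibniz rule for the dual connection along the curve, $\tfrac{d}{dt}\langle\lambda p,b\rangle=\langle\overline{\nabla}^*_a(\lambda p),b\rangle+\langle\lambda p,\overline{\nabla}_a b\rangle$, together with $b(0)=b(T)=0$ and the identity $\overline{\nabla}^*_a(\lambda p)=\lambda\,\overline{\nabla}^*_a p-\lambda\,(\partial L/\partial z)\,p$ coming from $\dot\lambda=-(\partial L/\partial z)\lambda$. As $b$ is then arbitrary and $\lambda$ nowhere vanishes, the fundamental lemma of the calculus of variations yields $\overline{\nabla}^*_a p-\rho^* dL_{\mathrm{hor}}(a,z)-(\partial L/\partial z)\,p=0$, which is~(iv). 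The converse follows by reversing these steps: assuming (i)--(v), each bracketed contribution above vanishes identically, so the integral is zero for every admissible variation; and because (i), (iv), (v) together with $p=dL_{\mathrm{ver}}$ are precisely the intrinsic Euler--Lagrange--Herglotz equations of Theorem~\ref{thm:ELP_Herglotz}, the curve $(a,z)$ projects to a Herglotz extremal. The step I expect to require the most care is the horizontal/vertical bookkeeping of the pairing $\langle p,a-v\rangle$: one must verify that $p\in E^*$ contracts only with the \emph{vertical} (fiber) components of $\delta a$ and $\delta v$, so that the $\rho(b)$ pieces enter exclusively through $dL_{\mathrm{hor}}$ and do not contaminate the algebraic momentum conditions~(ii)--(iii), and that the three variation families are genuinely independent so they may be isolated in sequence.
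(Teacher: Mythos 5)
Your proposal follows the paper's own proof essentially step for step: the same differentiation of the evolution equation \eqref{eq:HPH_z_revised_final} with respect to the variation parameter, the same integrating factor $\lambda(t)=\exp\!\big(-\int_0^t(\partial L/\partial z)\,d\omega\big)$ converting $\delta z(T)=0$ into a single integral identity, the same integration by parts via $\overline{\nabla}^{*}_{a}(\lambda p)=\lambda\,\overline{\nabla}^{*}_{a}p+\dot\lambda\,p$, and the same use of the independence of the families $\delta p_{\mathrm{ver}}$, $\delta v_{\mathrm{ver}}$ (your $w$) and $b$ to extract conditions (ii), (iii), (iv), with (i) holding by definition of an $(E,E^*)$-path and (v) following from $a=v$. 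The only divergence is that the paper's variation of the pairing $\langle p, a-v\rangle$ retains an extra cross-term proportional to $a-v$ when $b\neq 0$ (written there as $\rho^{*}(a-v)$ against $b$), which your calculation omits; this is harmless precisely because your sequential extraction establishes $a=v$ from the $\delta p$ family before the $b$-family is invoked, so the term vanishes on-shell --- exactly the horizontal/vertical bookkeeping point you yourself flagged as the delicate step.
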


\begin{proof}
Take an admissible variation $(X_{b,a}, \delta v, \delta p, \delta z)$ of a curve $(a(t), v(t), p(t), z(t))$ satisfying the Herglotz equation $\dot{z}=L(a(t),z(t))+
\langle p(t),\,a(t)-v(t)\rangle$. Then,
$$\frac{d}{dt}(\delta z)= \langle d L, (\delta v, \delta z) \rangle + \langle \delta p, a -v \rangle + \langle p, X_{b,a}\rangle - \langle p,\delta v\rangle= \langle dL_{\mathrm{hor}}(a), \rho(b) \rangle + \langle dL_{\mathrm{ver}}(a), \delta v_{\mathrm{ver}} \rangle + \frac{\partial L}{\partial z}\delta z$$

$$+\langle \delta p_{\mathrm{ver}}, a-v \rangle + \langle \rho(b), a-v\rangle + \langle p, \overline{\nabla}_{a} b - \delta v_{\mathrm{ver}} \rangle.$$

Solving for $\delta z(t)$ using the integrating factor
\[
\lambda(t) := \exp\!\left(-\int_0^t \frac{\partial L}{\partial z}(\omega)\,d\omega\right),
\]
and also that $\delta z(0)=0$ and $\lambda(t)\neq 0$, the stationarity condition $\delta z(T)=0$ is equivalent to

\begin{equation*}
\int_0^T \lambda(t)
\left(
\langle dL_{\mathrm{hor}}, \rho(b) \rangle + \langle dL_{\mathrm{ver}},  \delta v_{\mathrm{ver}}\rangle +\langle \delta p_{\mathrm{ver}}, a-v \rangle + \langle \rho(b), a-v\rangle + \langle p, \overline{\nabla}_{a} b - \delta v_{\mathrm{ver}} \rangle
\right) \,dt = 0.
\end{equation*}
Equivalently,
\[
\int_0^T \left(
\langle \lambda(t)\rho^{*}dL_{\mathrm{hor}}(a) + \lambda(t)\rho^{*}(a-v) -  \overline{\nabla}_{a}^{*}\left(\lambda(t)p\right), b \rangle + \langle dL_{\mathrm{ver}} - p,  \delta v_{\mathrm{ver}}\rangle + \langle \delta p_{\mathrm{ver}}, a-v \rangle
\right) \,dt = 0.
\]
The third term in the integrand satisfies
$$\overline{\nabla}_{a}^{*}\left(\lambda(t)p(t)\right) = \lambda(t)\overline{\nabla}_{a}^{*} p(t)  +\dot{\lambda}(t) p(t).$$
Hence, we obtain
\[
\int_0^T \left(
\lambda(t)\langle \rho^{*}dL_{\mathrm{hor}}(a) + \rho^{*}(a-v) - \overline{\nabla}_{a}^{*} p  + \frac{\partial L}{\partial z} p, b \rangle + \langle dL_{\mathrm{ver}} - p,  \delta v_{\mathrm{ver}}\rangle + \langle \delta p_{\mathrm{ver}}, a-v \rangle
\right) \,dt = 0.
\]
for arbitrary variations $b$, $\delta v_{\mathrm{ver}}$ and $\delta p_{\mathrm{ver}}$. Thus, from the fundamental theorem of calculus of variations we must have
$$\rho^{*}dL_{\mathrm{hor}}(a) - \overline{\nabla}_{a}^{*}p + \frac{\partial L}{\partial z} p = 0, \quad dL_{\mathrm{ver}} = p, \quad a=v.$$

Conversely, if we reverse all the arguments, the integrand in $\delta z(T)$ vanishes
identically, so the Hamilton--Pontryagin--Herglotz principle holds.
\end{proof}

\begin{corollary}[Intrinsic Euler--Lagrange--Poincar\'e--Herglotz equations]
\label{cor:ELP_Herglotz_clean}
Let $(x(t),a(t),z(t))$ satisfy $\dot x=\rho(a)$ and $\dot z=L(a,z)$ and
define $p=dL_{\mathrm{ver}}(a,z)$.  
Then $(x,a,z)$ satisfies the local Euler--Lagrange--Herglotz equations
if and only if $p$ satisfies the intrinsic equation
\begin{equation}\label{eq:ELP_Herglotz_connection}
\overline{\nabla}_{a}^{*}p (t) - \rho^{*}dL_{\mathrm{hor}}(a(t),z(t)) - \frac{\partial L}{\partial z}(a(t),z(t)) p(t) = 0,
\end{equation}
\end{corollary}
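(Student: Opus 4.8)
The plan is to deduce this corollary directly from the two intrinsic results already established, namely Theorem~\ref{thm:ELP_Herglotz} and Theorem~\ref{thm:implicit_ELP_Herglotz}, rather than repeat the variational computation. Under the stated hypotheses the curve $(x,a,z)$ already satisfies the kinematic constraint $\dot x=\rho(a)$ and the Herglotz equation $\dot z=L(a,z)$, and the momentum is fixed to $p=dL_{\mathrm{ver}}(a,z)$; these are precisely conditions (i), (iii) and (v) of Theorem~\ref{thm:implicit_ELP_Herglotz}, together with the choice $v=a$ coming from condition (ii). Specializing the implicit Euler--Lagrange--Poincar\'e--Herglotz system \eqref{eq:implicit_HPH_final_corrected} to this situation collapses the five relations (i)--(v) to the single intrinsic equation \eqref{eq:ELP_Herglotz_connection}, which is exactly relation (iv) after substituting $v=a$ and $p=dL_{\mathrm{ver}}(a,z)$. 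Thus one direction of the corollary is an immediate specialization of the Hamilton--Pontryagin--Herglotz theorem.

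For the equivalence with the \emph{local} Euler--Lagrange--Herglotz equations \eqref{eq:ELH_equations}, the most economical route is to invoke Theorem~\ref{thm:ELP_Herglotz}, which already proves that an admissible curve satisfies the Euler--Lagrange--Herglotz equations if and only if it satisfies the intrinsic relation \eqref{eq:ELP_Herglotz_intrinsic_short}. Since \eqref{eq:ELP_Herglotz_connection} is merely \eqref{eq:ELP_Herglotz_intrinsic_short} with all terms moved to one side, the two are literally the same equation, and the corollary follows at once. To keep the statement self-contained in coordinates, I would then appeal to Lemma~\ref{lem:local_ELP_Herglotz} in the Appendix, which expands $\overline{\nabla}_{a}^{*}p$, $\rho^{*}dL_{\mathrm{hor}}$ and $\frac{\partial L}{\partial z}\,p$ in a local frame and shows that \eqref{eq:ELP_Herglotz_connection} reduces termwise to \eqref{eq:ELH_equations}.

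The one point requiring care is the local expansion underlying Lemma~\ref{lem:local_ELP_Herglotz}. Writing $p=\frac{\partial L}{\partial y^\alpha}e^{\alpha}$ and computing $\overline{\nabla}_{a}^{*}p$ along $a(t)$ produces, besides the expected terms $\frac{d}{dt}(\partial L/\partial y^\alpha)$ and the structure-function term $C^\gamma_{\alpha\beta}y^\beta \partial L/\partial y^\gamma$, additional contributions involving the connection coefficients of $\nabla$. The main obstacle is therefore to verify that these connection-dependent contributions cancel against those arising from $\rho^{*}dL_{\mathrm{hor}}$, so that the resulting equation is independent of the chosen $TM$-connection and reproduces \eqref{eq:ELH_equations} exactly; this cancellation is precisely the mechanism already used in \cite{li2017lagrangian,hu2024hamiltonian} and recorded in the Appendix. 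The Herglotz contribution $\frac{\partial L}{\partial z}\,p$ passes through unchanged, since it is tensorial in $p$ and does not interact with the connection, so it simply reproduces the term $\frac{\partial L}{\partial z}\frac{\partial L}{\partial y^\alpha}$ in \eqref{eq:ELH_equations}. Granting Lemma~\ref{lem:local_ELP_Herglotz}, both implications of the corollary are immediate, and the converse direction follows by reading the same equivalence backwards.
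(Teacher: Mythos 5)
Your proposal is correct and matches the paper's own proof, which likewise disposes of the corollary in one stroke by citing Theorem~\ref{thm:ELP_Herglotz} and observing that \eqref{eq:ELP_Herglotz_connection} is just the intrinsic relation \eqref{eq:ELP_Herglotz_intrinsic_short} with all terms moved to one side, the coordinate expansion being delegated to Lemma~\ref{lem:local_ELP_Herglotz}. Your opening detour through Theorem~\ref{thm:implicit_ELP_Herglotz} is redundant---and for the direction ``local equations imply the intrinsic relation'' it would anyway require the equivalence you later obtain from Theorem~\ref{thm:ELP_Herglotz}---but since your second paragraph is self-contained this does not affect correctness.
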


\begin{proof}
By Theorem~\ref{thm:ELP_Herglotz}, for any admissible curve
$(x(t),a(t),z(t))$ with $\dot x=\rho(a)$ and $\dot z=L(a,z)$, the local
Euler--Lagrange--Herglotz equations are equivalent to the intrinsic
relation
\[
\rho^{*}dL_{\mathrm{hor}}(a(t),z(t)) + \overline{\nabla}_{a}^{*}p (t) - \frac{\partial L}{\partial z}(a(t),z(t)) p(t) = 0,
\]
where $p(t)=dL_{\mathrm{ver}}(a(t),z(t))$.
This is precisely the statement of the corollary.
\end{proof}

\begin{remark}
The classical Hamilton--Pontryagin principle on a Lie algebroid
\cite{li2017lagrangian} is based on the variational stationarity of an
action functional.  In the Herglotz setting, no action functional exists:
the quantity $z(t)$ is defined dynamically by the contact-type evolution
equation $\dot z = L(v,z) + \langle p,a-v\rangle$, and variational
stationarity is imposed on the terminal value $z(T)$.
This single modification produces a contact deformation of all geometric
structures involved. 


Thus the Hamilton--Pontryagin--Herglotz principle provides a natural
contact-type extension of the classical Pontryagin variational principle
on Lie algebroids, retaining its geometric structure while incorporating
dissipation and nonequilibrium effects in a covariant manner.\hfill$\diamond$
\end{remark}

\begin{example}[Geodesic flow with Herglotz-type dissipation]
\label{ex:geodesic_Herglotz}
Let $(Q,g)$ be a Riemannian manifold and take $E=TQ$ with its standard Lie
algebroid structure ($\rho=\mathrm{id}$ and bracket the Lie bracket of
vector fields).  
We will denote by $\nabla^{g}$ the Levi--Civita connection of $g$ and by $\nabla$ a locally trivial connection used to derived the local expression of the intrinsic equations as in the proof of Lemma \ref{lem:local_ELP_Herglotz}. Consider the Herglotz Lagrangian
\[
L(q,\dot q,z)=\tfrac12\,g_q(\dot q,\dot q)-V(q)-\gamma z,
\qquad \gamma>0.
\]

Then, using the connection $\nabla$ we obtain
\[
p_q=dL_{\mathrm{ver}}(q,\dot q,z)=g_q(\dot q,\cdot),
\qquad
dL_{\mathrm{hor}}(q,\dot q,z)=-dV(q),
\qquad
\frac{\partial L}{\partial z}=-\gamma.
\]
And the intrinsic Euler--Lagrange--Poincar\'e--Herglotz equation becomes
\[
\dot{p}_q
= -\,dV(q) - \gamma\,p_q.
\]
Using the musical isomorphism $\sharp:T^{*}Q \to TQ$ defined by $g(\sharp(p_{q}), v_{q})=\langle p_{q}, v_{q} \rangle$ for any $v_{q}\in TQ$, we may conclude that  $\sharp(p_q)=\dot q$, and using that $\sharp(\dot{p}_{q})=\nabla^{g}_{\dot{q}}\dot{q}$ yields the
damped Newton--Lagrange equation
\[
\nabla^{g}_{\dot q}\dot q
= -\mathrm{grad}_g V(q)-\gamma\,\dot q,
\]
where $\mathrm{grad}_g V=\sharp(dV)$. Thus a Riemannian geodesic with potential acquires a natural linear damping term generated by the Herglotz variable, reproducing Rayleigh friction in a completely coordinate-free way.
\end{example}

\begin{example}[Euler--Poincaré--Herglotz equations]
Take the Lie algebroid $E=\mathfrak{g}$ with anchor $0$ and
bracket the Lie bracket on $\mathfrak g$.  
Since the base is a point, there is no horizontal part.  
If we start with the trivial $E$--connection $\nabla\equiv 0$, then $\overline{\nabla}$ is simply the adjoint representation:
\[
\overline{\nabla}_\xi X=[\xi,X]=\operatorname{ad}_\xi X,\qquad
\overline{\nabla}^{*}_\xi\mu=-\operatorname{ad}^*_\xi\mu.
\]

Let $\ell:\mathfrak g\times\mathbb R\to\mathbb R$ be a reduced Herglotz
Lagrangian, with reduced velocity $\xi(t)\in\mathfrak g$ and reduced
momentum $\mu=\partial\ell/\partial\xi(\xi,z)$.  
The Herglotz equation is $\dot z=\ell(\xi,z)$.

Applying Theorem~\ref{thm:ELP_Herglotz} gives the
\emph{Euler--Poincaré--Herglotz equation}
\[
\dot\mu
=
\operatorname{ad}^*_\xi\mu
+
\frac{\partial\ell}{\partial z}(\xi,z)\,\mu,
\]
a dissipative deformation of the classical Euler--Poincaré equation \cite{anahory2024reduction}.
\end{example}

\begin{example}[Charged particle in a magnetic field on an Atiyah algebroid]
\label{ex:magnetic_Herglotz_Atiyah}

Let $\pi:Q\to M$ be a principal $S^1$--bundle endowed with a principal
connection $\mathcal A:TQ \to \mathfrak{g}$ and curvature $B:TQ\times TQ \to \mathfrak{g}$,
representing a magnetic field on $M$.
The associated Atiyah algebroid is $E=TQ/S^1\to M$. Using the connection $\mathcal A$, one has the standard Lie algebroid identification
\[
E \cong TM \oplus (Q\times\mathbb R)/S^1,
\]
where $\tilde{\mathfrak{g}}:=(Q\times\mathbb R)/S^1$ is the adjoint bundle over $M$. A typical element $a\in E_x$ is written as $a=(X,\bar{u})$, where $X\in T_xM$ represents the horizontal velocity and $\bar{u}=[(q,u)]$ with $u\in\mathbb R$ and $\pi(q)=x$ is the internal charge variable. The projection is simply $\tau(X,\bar{u})=\tau_M(X)$, where $\tau_M:TM\to M$ is the tangent bundle projection, the anchor is the projection onto the first component $\rho(X,\bar{u})=X$ and the bracket is given by
$$[(X,\bar{u}), (Y,\bar{v})]=\left( [X,Y], \tilde{\nabla}_X \bar{v} - \tilde{\nabla}_Y \bar{u} - \tilde{B}(X,Y)\right),$$
where $\tilde{\nabla}$ is the associated connection on $\tilde{\mathfrak{g}}$ to the $E$-connection $\nabla$ derived from the principal connection $\mathcal{A}$ and $\tilde{B}:TM\times TM\to \mathfrak{g}$ is the reduced principal connection curvature defined by $\tilde{B}(X,Y)=[(q, B(X^{h}, Y^{h}))]$, for $X,Y\in T_xM$ and $X^{h}$ denotes the horizontal lift (see \cite{cendra2001lagrangian} for more details).



Let $g$ be a Riemannian metric on $M$, $m>0$ and $e\in\mathbb R$ be the mass and electric charge of the particle and $V:M\to\mathbb{R}$ a potential function defined on $M$.
Consider the Herglotz Lagrangian
\[
L(x,\dot{x},u,z)
=\tfrac m2\,g_x(\dot{x},\dot{x})+e\,u -V(x)-\gamma z,
\qquad \gamma>0.
\]

With respect to a $TM$--connection $\nabla$ on $E$ of the form
$$\nabla_{X}(Y,\bar{u})=(\nabla^{M}_X Y, \tilde{\nabla}_{X} \bar{u}),$$
where $\nabla^{M}_X Y$ is a locally trivial affine connection on $M$ and $\tilde{\nabla}$ is the associated connection on $\tilde{\mathfrak{g}}$ defined previously, the vertical differential of $L$ is
\[
dL_{\mathrm{ver}}(x,\dot{x},u,z)
=(m\,g_x(\dot{x},\cdot),\,e)\in T_x^*M\oplus\mathbb R^*,
\]
which defines the momentum
\[
p=(p_{x},p_u)=(m\,g(\dot{x},\cdot),\,e).
\]
The horizontal differential and partial derivative with respect to $z$ are
\[
dL_{\mathrm{hor}}=-dV,
\qquad
\frac{\partial L}{\partial z}=-\gamma.
\]
In addition, the connection $\overline{\nabla}$ on $E$ satisfies
$$\overline{\nabla}_{(X,\bar{u})}(Y,\bar{v}) = \nabla_{(X,\bar{u})}(Y,\bar{v}) - (0, \tilde{B}(X,Y)).$$
Applying the intrinsic Euler--Lagrange--Poincar\'e--Herglotz equation
\eqref{eq:ELP_Herglotz_connection} yields
\[
\overline{\nabla}_{(\dot{x},\bar{u})}^{*}p
=
-\rho^*(dV)
-
\gamma p.
\]
Given a time-dependent section of $E^{*}$ such as $p(t)=\mu(t, x(t))$, we have that
$$\langle \overline{\nabla}_{(\dot{x},\bar{u})}^{*}p (t), (Y,\bar{v}) \rangle = \langle\dot{p}, (Y,\bar{v}) \rangle - \langle p(t), \overline{\nabla}_{(\dot{x},\bar{u})} (Y,\bar{v}) \rangle = \langle\dot{p}, (Y,\bar{v}) \rangle  - \langle p(t), \nabla_{Y} (\dot{x},\bar{u}) + [(\dot{x},\bar{u}), (Y,\bar{v})]  \rangle$$
which gives
$$\langle \overline{\nabla}_{(\dot{x},\bar{u})}^{*}p (t), (Y,\bar{v}) \rangle = \langle\dot{p}, (Y,\bar{v}) \rangle  - \langle p, (\overline{\nabla}^{M}_{X} Y, \tilde{\nabla}_X\bar{v} - \tilde{B}(X,Y) )\rangle.$$
Projecting onto the component along $TM$ and using that the connection $\overline{\nabla}^{M}$ is locally trivial, we obtain a coordinate expression of the form
\[
\dot{p}_{x} + \tilde{B}(\dot{x},\cdot) p_u= -dV-\gamma p_x.
\]
where $(0 ,\tilde{B}(\dot{x},\cdot) )^{*}$ is an adjoint operator defined by $\langle(0 ,\tilde{B}(\dot{x},\cdot) )^{*}p, Y\rangle = \langle p, (0, \tilde{B}(\dot{x},Y)) \rangle$.
Using the musical isomorphism $\sharp:T^{*}M \to TM$, we obtain the \emph{magnetic geodesic equation with Herglotz damping}
\[
m\,\nabla^g_{\dot{x}} \dot{x}
=
-e\,\sharp(\iota_{\dot{x}}\tilde{B})
-
\mathrm{grad}_g V(x)
-
\gamma m\,\dot{x},
\qquad
\dot z=L(x,\dot{x},u,z).
\]
where $\mathrm{grad}_g V=\sharp(dV)$, $(\iota_{\dot{x}}\tilde{B})(Y)=\tilde{B}(\dot{x},Y)$ and $\nabla^g$ is the Levi--Civita connection of $(M,g)$.

Thus the curvature of the Atiyah algebroid reproduces the magnetic Lorentz
force, while the Herglotz term produces a linear dissipation of kinetic
energy in a fully intrinsic and gauge-invariant manner.
\end{example}

\section{Conclusions and Future Work}

In this paper we developed a variational framework for \emph{dissipative mechanics on Lie algebroids}. We derived the Euler--Lagrange--Herglotz equations in local coordinates and then obtained a fully coordinate-free, connection-based formulation of the Euler--Lagrange--Poincar\'e--Herglotz equations. This unifies classical contact mechanics, Euler--Poincar\'e and Lagrange--Poincar\'e reduction with linear dissipation.

We also introduced the Hamilton--Pontryagin--Herglotz principle on a Lie algebroid, yielding an implicit system that captures simultaneously kinematic constraints, the momentum relation, the horizontal dynamics, and the Herglotz evolution equation. This formulation is well adapted to generalizations to discrete geometry and to the systematic treatment of symmetries, reduction, and curvature effects. Moreover, our framework leads naturally to energy balance laws and Noether--Herglotz theorems, showing how classical conserved quantities are replaced by \emph{dissipated invariants} determined by an integrating factor.

The geometric nature of our results suggests several important research directions. A central objective is the construction of \emph{geometric structure-preserving contact integrators} on \emph{Lie groupoids}, obtained by discretizing the Hamilton--Pontryagin--Herglotz principle. Such integrators would preserve admissibility at the groupoid level, reproduce the correct exponential energy decay, and retain the dissipated Noether quantities exactly. This represents the natural dissipative analogue of variational integrators on groupoids and remains largely unexplored.

\appendix

\section{Appendix: Local expression of the intrinsic Herglotz equation}

\begin{lemma}\label{lem:local_ELP_Herglotz}
Let $\{e_\alpha\}$ be a local frame of $E$ with structure functions
$[e_\alpha,e_\beta]=C^\gamma_{\ \alpha\beta}e_\gamma$ and anchor
components $\rho(e_\alpha)=\rho^i_{\ \alpha}\frac{\partial}{\partial x^{i}}$.
Write $a(t)=y^\alpha(t)e_\alpha|_{x(t)}$ and
$p(t)=p_\alpha(t)e^\alpha|_{x(t)}$ where $e^\alpha$ is the dual frame.
Then the intrinsic Euler--Lagrange--Poincar\'e--Herglotz equation
\[
\overline{\nabla}_{a}^{*}p (t) - \rho^{*}dL_{\mathrm{hor}}(a(t),z(t))  - \frac{\partial L}{\partial z}(a(t),z(t)) p(t) = 0, \quad p=dL_{\mathrm{ver}}
\]
takes the local form
\[
\dot p_\alpha
+ C^\gamma_{\ \alpha\beta}(x)\,y^\beta p_\gamma
-
\rho^i_{\ \alpha}(x)\frac{\partial L}{\partial x^i}
-\frac{\partial L}{\partial z}(x,y,z)\,p_\alpha = 0,
\]
which is exactly the Euler--Lagrange--Herglotz system of
Theorem~\ref{thm:ELH_algebroid}.
\end{lemma}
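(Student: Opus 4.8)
The plan is to evaluate each term of the intrinsic equation in the local frame $\{e_\alpha\}$ (with dual frame $\{e^\alpha\}$) and read off its component along $e_\alpha$. Since the remark following Theorem~\ref{thm:ELP_Herglotz} (and the cited result of Hu--Stern \cite{hu2024hamiltonian}) guarantees that the local form of the equation is independent of the chosen $TM$-connection, I would work throughout with the locally trivial connection determined by the frame, $\nabla_{\partial/\partial x^i}e_\alpha = 0$. With this choice the horizontal/vertical splitting of $TE$ is the coordinate splitting, so that $dL_{\mathrm{ver}} = (\partial L/\partial y^\alpha)\,e^\alpha$ and $dL_{\mathrm{hor}} = (\partial L/\partial x^i)\,dx^i$; in particular the momentum identity $p = dL_{\mathrm{ver}}$ becomes $p_\alpha = \partial L/\partial y^\alpha$, which is exactly what is needed to match Theorem~\ref{thm:ELH_algebroid} at the end.

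The first and central computation is the basic $E$-connection $\overline\nabla_a e_\beta = \nabla_{\rho(e_\beta)}a + [a,e_\beta]$ for $a = y^\alpha e_\alpha$. Expanding the bracket via the Leibniz rule gives $[a,e_\beta] = y^\alpha C^\gamma_{\ \alpha\beta}e_\gamma - \rho^i_{\ \beta}(\partial y^\alpha/\partial x^i)e_\alpha$, while $\nabla_{\rho(e_\beta)}a = \rho^i_{\ \beta}(\partial y^\alpha/\partial x^i)e_\alpha$ in the trivial connection; the two $\partial y^\alpha/\partial x^i$ terms cancel, leaving the clean expression $\overline\nabla_a e_\beta = y^\alpha C^\gamma_{\ \alpha\beta}e_\gamma$. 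Dualizing through $\langle\overline\nabla^*_a p, e_\beta\rangle = \rho(a)[p_\beta] - \langle p, \overline\nabla_a e_\beta\rangle$ and then applying the along-curve rule $\overline\nabla^*_{a(t)}p(t) = \overline\nabla^*_{a(t)}\mu + \dot\mu$ (with $p(t) = \mu(t,x(t))$), the spatial directional derivative $\rho(a)[p_\beta]$ combines with the explicit time derivative into the total derivative $\dot p_\beta$ along the admissible curve (using $\dot x^i = \rho^i_{\ \alpha}y^\alpha$). After one relabeling and the skew-symmetry $C^\gamma_{\ \beta\alpha} = -C^\gamma_{\ \alpha\beta}$, I obtain $\langle\overline\nabla^*_{a(t)}p, e_\alpha\rangle = \dot p_\alpha + C^\gamma_{\ \alpha\beta}y^\beta p_\gamma$.

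It then remains to record the two simpler terms: $\rho^*$ acts on covectors by $(\rho^*\theta)_\alpha = \rho^i_{\ \alpha}\theta_i$, so the $\alpha$-component of $\rho^* dL_{\mathrm{hor}}$ is $\rho^i_{\ \alpha}(\partial L/\partial x^i)$, and the $\alpha$-component of $(\partial L/\partial z)\,p$ is simply $(\partial L/\partial z)\,p_\alpha$. Subtracting these from the previous expression and setting the result to zero yields precisely the stated local equation; finally substituting $p_\alpha = \partial L/\partial y^\alpha$, and hence $\dot p_\alpha = \tfrac{d}{dt}(\partial L/\partial y^\alpha)$, reproduces the Euler--Lagrange--Herglotz system of Theorem~\ref{thm:ELH_algebroid}.

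I expect the main obstacle to be the bookkeeping of the covariant derivative along the curve: making precise that the spatial piece $\rho(a)[p_\beta]$ hidden in $\overline\nabla^*_a$ assembles with the explicit $t$-derivative into the total time derivative $\dot p_\beta$, and keeping the index ordering and signs in the structure functions consistent so that the damping term emerges correctly. Relatedly, one must be careful in invoking connection-independence: individual terms such as $\overline\nabla^*_a p$ and $dL_{\mathrm{hor}}$ do depend on $\nabla$ (they acquire Christoffel contributions for a general connection), and only their combination is invariant, so the convenient choice of the trivial connection that eliminates these contributions should be explicitly justified by the cited invariance result.
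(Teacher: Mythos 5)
Your proposal is correct and follows essentially the same route as the paper's proof: both fix the locally trivial connection $\nabla_{\partial/\partial x^{i}}e_\alpha=0$ (with connection-independence invoked, which the paper also verifies directly by noting the Christoffel terms cancel for a general connection), compute $\overline{\nabla}_a e_\beta = y^\alpha C^\gamma_{\ \alpha\beta}e_\gamma$ from the bracket, dualize and relabel via skew-symmetry to obtain $\dot p_\alpha + C^\gamma_{\ \alpha\beta}y^\beta p_\gamma$, and conclude by substituting $p_\alpha=\partial L/\partial y^\alpha$. If anything, your write-up is slightly more explicit than the paper's on two points it glosses over: the cancellation of the $\rho^i_{\ \beta}\,\partial y^\alpha/\partial x^i$ terms between $\nabla_{\rho(e_\beta)}a$ and $[a,e_\beta]$, and the assembly of $\rho(a)[p_\beta]$ with the explicit $t$-derivative into the total derivative $\dot p_\beta$ along the admissible curve.
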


\begin{proof}

Choosing without loss of generality the locally trivial $TM$-connection on $E$, i.e., the connection with $\nabla_{\frac{\partial}{\partial x^{i}}} e_{\alpha}\equiv 0$, the vertical and horizontal components of $dL$ are
$$dL_{\mathrm{ver}}=\frac{\partial L}{\partial y^{\alpha}}e^{\alpha}, \quad dL_{\mathrm{hor}}=\frac{\partial L}{\partial x^{i}}dq^{i}.$$

In addition, given a time-dependent section of $E^{*}$ such as $p(t)=\mu(t, x(t))$, we have that
$$\langle \overline{\nabla}_{a}^{*}p (t), e_{\alpha} \rangle = \dot{p}_{\alpha} - \langle p(t), \overline{\nabla}_{a} e_{\alpha} \rangle = \dot{p}_{\alpha} - \langle p(t), \nabla_{\rho(e_{\alpha})} a + [a, e_{\alpha}]  \rangle = \dot{p}_{\alpha} - y^{\beta} C_{\beta \alpha}^{\gamma} p_{\gamma} = \dot{p}_{\alpha} + y^{\beta} C_{\alpha \beta}^{\gamma} p_{\gamma}.$$

Substituting these expressions into the intrinsic Euler-Lagrange-Poincaré-Herglotz equations we obtain the Euler--Lagrange--Herglotz system of
Theorem~\ref{thm:ELH_algebroid}, i.e.,
$$\frac{d}{dt}\left(\frac{\partial L}{\partial y^\alpha}\right)
+ C^\gamma_{\ \alpha\beta}(x)\,y^\beta \frac{\partial L}{\partial y^\gamma}
- \rho^i_{\ \alpha}(x)\,\frac{\partial L}{\partial x^i}
- \frac{\partial L}{\partial z}\,\frac{\partial L}{\partial y^\alpha}
= 0.$$
It is easy to repeat the calculations and check that if the connection was not locally trivial and we had instead $\nabla_{\frac{\partial}{\partial x^{i}}} e_{\alpha}= \Gamma_{i\alpha}^{\gamma} e_{\gamma}$, the terms containing Christoffel symbols would cancel out and we would obtain the same local expressions.
\end{proof}

\bibliography{mybib}
\bibliographystyle{ieeetr}

\end{document}